\numberwithin{equation}{section}
\DeclareMathOperator*{\argmin}{arg\,min}
\newtheorem{theorem}{{\bf Theorem}}[section]
\theoremstyle{definition} 
\theoremstyle{plain}
\theoremstyle{definition}
\newcommand{\ud}{\mathrm{d}}
\newcommand{\E}{\mathbb{E}}
\newcommand{\V}{\mathbb{V}\mathrm{ar}}
\newcommand{\aV}{\mathrm{as}\mathbb{V}\mathrm{ar}}
\newcommand{\Cov}{\mathbb{C}\mathrm{ov}}
\newcommand{\aCov}{\mathrm{as}\mathbb{C}\mathrm{ov}}
\newcommand{\Pro}{\mathrm{Pr}}
\newcommand{\Se}{\mathrm{Se}}
\newcommand{\Sp}{\mathrm{Sp}}
\newcommand{\TCF}{\mathrm{TCF}}
\title{Nonparametric Estimation of ROC Surfaces Under Verification Bias}
\author{K. To Duc\thanks{toduc@stat.unipd.it}, M. Chiogna\thanks{monica.chiogna@unipd.it}\, \& G. Adimari\thanks{gianfranco.adimari@unipd.it} \\
\small Department of Statistical Sciences, University of Padua, \\ \small Via C. Battisti, 241-243, 35121 Padova, Italy}
\date{\today}
\begin{document}
\maketitle

\pagestyle{fancy}
\fancyhf{}
\rhead{\it Draft: \today\ at \currenttime}
\cfoot{\thepage}
%%==================================================================
\begin{abstract}
Verification bias is a well known problem when the predictive ability of a diagnostic test has to be evaluated. In this paper, we discuss how to assess the accuracy of continuous-scale diagnostic tests in the presence of verification bias, when a three-class disease status is considered. In particular, we propose a fully nonparametric verification bias-corrected estimator of the ROC surface. Our approach is based on nearest-neighbor imputation and adopts generic smooth regression models for both the disease and the verification processes. Consistency and asymptotic normality of the proposed estimator are proved and its finite sample behavior is investigated by means of several Monte Carlo simulation studies. Variance estimation is also discussed and an illustrative example is presented. 
\\

\textbf{Key words:} diagnostic tests, missing at random, true class fractions, nearest-neighbor imputation.
\end{abstract}
%%==================================================================
\section{Introduction} \label{sec:intro}
The evaluation of the accuracy of diagnostic tests is an important issue in modern medicine. In order to evaluate a test, knowledge of the true disease status of subjects or patients under study is necessary. Usually, this is obtained by a gold standard (GS) test, or reference test, that always correctly ascertains the true disease status. 

Sensitivity (Se) and specificity (Sp) are frequently used to assess the accuracy of diagnostic tests when the disease status has two categories (e.g., ``healthy'' and ``diseased''). 
In a two-class problem, for a diagnostic test $T$ that yields a continuous measure, the receiver operating characteristic (ROC) curve is a popular tool for displaying the ability of the test to distinguish between non--diseased and diseased subjects. The ROC curve is defined as the set of points $\{(1-\Sp(c),\Se(c)), c \in (-\infty,\infty)\}$ in the unit square, where $\Se(c) = \Pro(T \ge c | \text{ subject is diseased})$ and $\Sp(c) = \Pro(T < c | \text{ subject is non--diseased})$ for given a cut point $c$. The shape of ROC curve allows to evaluate the ability of the test. For example, a ROC curve equal to a straight line joining points $(0,0)$ and $(1,1)$ represents a diagnostic test which is the random guess. A commonly used summary measure that aggregates performance information of the test is the area under ROC curve (AUC). 
%According to the property of ROC curve, 
Reasonable values of AUC range from 0.5, suggesting that the test is no better than chance alone, to 1.0, which indicates a perfect test.

In some medical studies, however, the disease status often involves more than two categories; for example, Alzheimer's  dementia can be classified into three categories (see \cite{chi} for more details). In such situations, quantities used to evaluate the accuracy of tests are the true class fractions (TCF's). These are well defined as a generalization of sensitivity and specificity.
%
%In a two-class problem, for a diagnostic test $T$ that yields a continuous measure, the receiver operating characteristic (ROC) curve is a popular tool for displaying the ability of the test to distinguish between non--diseased and diseased subjects. The ROC curve is defined as the set of points $\{(1-\Sp(c),\Se(c)), c \in (-\infty,\infty)\}$ in the unit square, where $\Se(c) = \Pro(T \ge c | \text{ subject is diseased})$ and $\Sp(c) = \Pro(T < c | \text{ subject is non--diseased})$ for given a cut point $c$. The shape of ROC curve allows to evaluate the ability of the test. For example, a ROC curve equal to a straight line joining points $(0,0)$ and $(1,1)$ represents a diagnostic test which is the random guess. A commonly used summary measure that aggregates performance information of the test is the area under ROC curve (AUC). 
%%According to the property of ROC curve, 
%Reasonable values of AUC range from 0.5, suggesting that the test is no better than chance alone, to 1.0, which indicates a perfect test.
%When the disease status has three classes, the true class fractions could be used. 
For given a pair of cut points $(c_1,c_2)$ such that $c_1 < c_2$, the true class fractions TCF's of the continuous test $T$ at $(c_1,c_2)$ are
\begin{eqnarray}
\TCF_1(c_1) &=& \Pro(T < c_1|\mathrm{class}\, 1) = 1 - \Pro(T \ge c_1|\mathrm{class}\, 1), \nonumber \\
\TCF_2(c_1,c_2) &=& \Pro(c_1 < T < c_2|\mathrm{class}\, 2) = \Pro(T \ge c_1|\mathrm{class}\, 2) - \Pro(T \ge c_2|\mathrm{class}\, 2), \nonumber \\
\TCF_3(c_2) &=& \Pro(T > c_2|\mathrm{class}\, 3) = \Pro(T \ge c_2|\mathrm{class}\, 3) \nonumber.
\end{eqnarray}
The plot of (TCF$_1$, TCF$_2$, TCF$_3$) at various values of the pair $(c_1,c_2)$ produces the ROC surface in the unit cube. It is not hard to realize that ROC surface is a generalization of the ROC curve (see \cite{scu,nak:04,nak:14}). Indeed, the projection of the ROC surface to the plane defined by TCF$_2$ versus TCF$_1$ yields the ROC curve between classes 1 and 2. Similarly, by projecting ROC surface to the plane defined by the axes TCF$_2$ and TCF$_3$, the ROC curve between classes 2 and 3 is produced. The ROC surface will be the triangular plane with vertices $(0, 0, 1), (0, 1, 0)$, and $(1, 0, 0)$ if all of three TCF's are equal for every pair $(c_1,c_2)$. In this case, we say that the diagnostic test is the random guess, again. In practice, one can imagine that the graph of the ROC surface lies in the unit cube and above the plane of the triangle with three vertices $(0, 0, 1), (0, 1, 0)$, and $(1, 0, 0)$. A summary of the overall diagnostic accuracy of the test under consideration is the volume under the ROC surface (VUS), which can be seen as a generalization of the AUC. Reasonable values of VUS vary from 1/6 to 1, ranging from bad to perfect diagnostic tests.

If we know the true disease status of all patients for which the test $T$ is measured,  then the ROC curve or the ROC surface can be estimated unbiasedly. In practice, however,  the GS test can be too expensive, or too invasive, or both for regular use. Typically, only a subset of patients undergoes disease verification, and the decision to send a patient to verification is often based on the diagnostic test result and other patient characteristics. For example, subjects with negative test results may be less likely to receive a GS test than subjects with positive test results. If only data from patients with verified disease status are used to estimate the ROC curve or the ROC surface, this generally leads to a biased evaluation of the ability of the diagnostic tests. This bias is known as verification bias. See, for example, \cite{zho} and \cite{pep} as general references.

Correcting for verification bias is a fascinating issue of medical statistics. Various methods have been developed to deal with the problem, most of which assume that the true disease status, if missing, is missing at random (MAR), see \cite{lit}. Under the MAR assumption, there are some verification bias-corrected methods for diagnostic tests, in the two-class case. Among the others, \cite{zho} present maximum likelihood approaches, \cite{rot} consider a doubly robust estimation of the area under ROC curve, while \cite{he} study a robust estimator for sensitivity and specificity by using propensity score stratification. Verification bias correction for continuous tests has been studied by \cite{alo:05} and \cite{adi:15}. In particular, \cite{alo:05} propose four types of partially parametric estimators of sensitivity and specificity under the MAR assumption, i.e., full imputation (FI), mean score imputation (MSI), inverse probability weighting (IPW) and semiparametric efficient (SPE, also known as doubly robust DR) estimator. \cite{adi:15}, instead, proposed a fully nonparametric approach for ROC analysis.

%In a three-class diagnostic problem, 
The issue of correcting for the verification bias in ROC surface analysis is very scarcely considered in the literature. Until now, only \cite{chi} and \cite{toduc:15} discuss the issue. \cite{chi} propose  maximum likelihood estimates for ROC surface and VUS corresponding to ordinal diagnostic tests, whereas \cite{toduc:15} extend the methods in \cite{alo:05} to the estimation of ROC surfaces in cases of continuous diagnostic tests. 

%In particular, the authors mentioned 
FI, MSI, IPW and SPE methods in
\cite{toduc:15} are partially parametric methods. Their use requires the specification of  parametric regression models for the probability of a subject being correctly classified with respect to the disease state, or the probability of a subject being verified
(i.e., tested by GS), or both. A wrong specification of such parametric models 
can negatively affect the behavior of the estimators, 
that are no longer consistent. %\bf Anche poco flessibili!}

%An suitable solution for reducing the effects of model misspecification is to use a fully nonparametric method, for instance, nearest neighbor (NN) imputation (see \cite{ning:12}). This method was used to estimate ROC curve by \cite{adi:15}. The authors showed the efficiency of this method when the both parametric models are misspecified, i.e., in almost cases of their simulation and application data, NN imputation always yields estimated values that are close to the full data estimates. In addition, the estimation variances procedure in their paper performed well. These motivated us to develop NN imputation method for continuous diagnostic tests with three--class disease status.

In this paper, we propose a fully nonparametric approach to estimate TCF$_1$, TCF$_2$ and TCF$_3$ in the presence of verification bias, for continuous diagnostic tests.  The proposed approach is based on a nearest-neighbor (NN) imputation rule, as in \cite{adi:15}. Consistency and asymptotic normality of the estimators derived from the proposed method are studied. In addition, estimation of their variance is also discussed. To show usefulness of our proposal and advantages in comparison with partially parametric estimators, we conduct some simulation studies and give an illustrative example.
 
The rest of paper is organized as follows. In Section 2, we review partially parametric methods for correcting for verification bias in case of continuous tests. The proposed nonparametric method for estimating  ROC surfaces and the related asymptotic results are presented in Section 3. In Section 4, we discuss variance-covariance estimation and in Section 5 we give some simulation results. An application is illustrated in Section 6. Finally, conclusions are drawn in Section 7.

\section{Partially parametric estimators of ROC surfaces}\label{sec:2}
%\section{Background}\label{s:back}
Consider a study with $n$ subjects, for whom the result of a continuous diagnostic test $T$ is available. For  each subject,  $D$ denotes the true disease status, 
that can be possibly unknown.
% is defined by the GS test, where $D$ has three categories.
 Hereafter, we will describe the true disease status as a trinomial random vector $D = (D_1,D_2,D_3)$. $D_k$ is a binary variable that takes $1$ if the subject belongs to class $k$, $k = 1,2,3$ and $0$ otherwise. Here, class 1, class 2 and class 3 can be referred, for example, as ``non-diseased'', ``intermediate'' and ``diseased''. Further, let $V$ be a binary verification status for a subject, such that $V = 1$ if he/she is undergoes the GS test, and $V = 0$ otherwise. In practice, some information, other than the results from the  test $T$, can be obtained for each patient. Let $A$ be the covariate vector for the patients, that may be associated  both with $D$ and $V$.
We are interested in estimating the ROC surface of $T$, and hence the
%which is defined as follows. Suppose that subject $i$ is classified into class 1 if $T_i < c_1$; class 2 if $c_1 \le T_i < c_2$; and class 3 otherwise for prespecified constants $c_1, c_2$, such that $c_1 < c_2$. The quantities 
 true class factions $\TCF_1(c_1) = \Pro(T_i < c_1| D_{1i} = 1),$ $\TCF_1(c_1) = \Pro(c_1 < T_i < c_2| D_{2i} = 1)$ and $\TCF_3(c_1) = \Pro(T_i \ge c_2| D_{3i} = 1)$, for fixed constants $c_1, c_2$, with $c_1 < c_2$.
%are referred to as the true class factions of this procedure. The 
%the ROC surface is the plot of (TCF$_1(c_1)$, TCF$_2(c_1,c_2)$, TCF$_3(c_2)$), when $c_1$ and $c_2$ ($c_1 < c_2$) range over the entire real line.
% (see \cite{scu} and \cite{nak:04} for more details).

When all patients have their disease status verified by a GS, i.e., $V_i = 1$ for all $i = 1,\ldots,n$, for any pair of cut points $(c_1,c_2)$, the true class fractions $\TCF_1(c_1), \TCF_2(c_1,c_2)$ and $\TCF_3(c_2)$ can be easily estimated by\\
\begin{eqnarray}
\widehat{\TCF}_1(c_1) &=& 1 - \frac{\sum\limits_{i=1}^{n}\mathrm{I}(T_i \ge c_1)D_{1i}}{\sum\limits_{i=1}^{n}D_{1i}} \nonumber\\
\widehat{\TCF}_2(c_1,c_2) &=& \frac{\sum\limits_{i=1}^{n}\mathrm{I}(c_1 \le T_i < c_2)D_{2i}}{\sum\limits_{i=1}^{n}D_{2i}}\nonumber \label{nonp:est}\\
\widehat{\TCF}_3(c_2) &=& \frac{\sum\limits_{i=1}^{n}\mathrm{I}(T_i \ge c_2)D_{3i}}{\sum\limits_{i=1}^{n}D_{3i}} \nonumber,
\end{eqnarray}
where $\mathrm{I}(\cdot)$ is the indicator function. It is straightforward to show that the above estimators are unbiased. However, they cannot be employed in case of incomplete data, i.e. when $V_i = 0$ for some $i = 1,\ldots,n$.

When only some subjects are selected to undergo the GS test, we need to make an assumption about the selection mechanism. We assume that the verification status $V$ and the disease status $D$ are mutually independent given the test result $T$ and covariate $A$. This means that  $\Pro (V|T,A) = \Pro (V|D,T,A)$ or equivalently $\Pro (D|T,A) = \Pro (D|V,T,A)$. Such assumption is a special case of the missing at random (MAR) assumption (\cite{lit}).

Under MAR assumption, verification bias-corrected estimation of the  true class factions is discussed in \cite{toduc:15}, where
 (partially) parametric estimators, based on four different approaches, are given. In particular,
full imputation (FI) estimators of $\TCF_1(c_1), \TCF_2(c_1,c_2)$ and $\TCF_3(c_2)$ are defined as
\begin{eqnarray}
\widehat{\TCF}_{1,\mathrm{FI}}(c_1) &=& 1 - \frac{\sum\limits_{i=1}^{n}\mathrm{I}(T_i \ge c_1)\hat{\rho}_{1i}}{\sum\limits_{i=1}^{n}\hat{\rho}_{1i}}  \nonumber, \\
\widehat{\TCF}_{2,\mathrm{FI}}(c_1,c_2) &=& \frac{\sum\limits_{i=1}^{n}\mathrm{I}(c_1 \le T_i < c_2)\hat{\rho}_{2i}}{\sum\limits_{i=1}^{n}\hat{\rho}_{2i}}, \label{est:fi} \\
\widehat{\TCF}_{3,\mathrm{FI}}(c_2) &=& \frac{\sum\limits_{i=1}^{n}\mathrm{I}(T_i \ge c_2)\hat{\rho}_{3i}}{\sum\limits_{i=1}^{n}\hat{\rho}_{3i}}. \nonumber
\end{eqnarray}
This method requires a parametric model (e.g. multinomial logistic regression model) to obtain the estimates $\hat{\rho}_{ki}$ of $\rho_{ki} = \Pro(D_{ki} = 1|T_i, A_i)$, using only data from verified subjects. Differently, the mean score imputation (MSI) approach only uses the estimates $\hat{\rho}_{ki}$ for the missing values of disease status $D_{ki}$. Hence, MSI estimators are
\begin{eqnarray}
\widehat{\TCF}_{1,\mathrm{MSI}}(c_1) &=& 1 - \frac{\sum\limits_{i=1}^{n}\mathrm{I}(T_i \ge c_1)\left[V_i D_{1i} + (1 - V_i) \hat{\rho}_{1i}\right]}{\sum\limits_{i=1}^{n}\left[V_i D_{1i} + (1 - V_i) \hat{\rho}_{1i}\right]}, \nonumber \\
\widehat{\TCF}_{2,\mathrm{MSI}}(c_1,c_2) &=&  \frac{\sum\limits_{i=1}^{n}\mathrm{I}(c_1 \le T_i < c_2)\left[V_i D_{2i} + (1 - V_i) \hat{\rho}_{2i}\right]}{\sum\limits_{i=1}^{n}\left[V_i D_{2i} + (1 - V_i) \hat{\rho}_{2i}\right]}, \label{est:msi3} \\
\widehat{\TCF}_{3,\mathrm{MSI}}(c_2) &=&  \frac{\sum\limits_{i=1}^{n}\mathrm{I}(T_i \ge c_2)\left[V_i D_{3i} + (1 - V_i) \hat{\rho}_{3i}\right]}{\sum\limits_{i=1}^{n}\left[V_i D_{3i} + (1 - V_i) \hat{\rho}_{3i}\right]}. \nonumber
\end{eqnarray}
The inverse probability weighting (IPW) approach weights each verified subject by the inverse of the probability that the subject is selected for verification. Thus, $\TCF_1(c_1), \TCF_2(c_1,c_2)$ and $\TCF_3(c_2)$ are estimated by
\begin{eqnarray}
\widehat{\TCF}_{1,\mathrm{IPW}}(c_1) &=&  1 - \frac{\sum\limits_{i=1}^{n}\mathrm{I}(T_i \ge c_1) V_i \hat{\pi}_i^{-1}D_{1i}}{\sum\limits_{i=1}^{n}V_i \hat{\pi}_i^{-1}D_{1i}} \nonumber, \\
\widehat{\TCF}_{2,\mathrm{IPW}}(c_1,c_2) &=&  \frac{\sum\limits_{i=1}^{n}\mathrm{I}(c_1 \le T_i < c_2)V_i \hat{\pi}_i^{-1}D_{2i}}{\sum\limits_{i=1}^{n}V_i \hat{\pi}_i^{-1}D_{2i}}, \label{est:ipw4} \\
\widehat{\TCF}_{3,\mathrm{IPW}}(c_2) &=& \frac{\sum\limits_{i=1}^{n}\mathrm{I}(T_i \ge c_2) V_i \hat{\pi}_i^{-1}D_{3i}}{\sum\limits_{i=1}^{n}V_i \hat{\pi}_i^{-1}D_{3i}}, \nonumber
\end{eqnarray}
where $\hat{\pi}_i$ is an estimate of the conditional verification probabilities $\pi_i = \Pro(V_i = 1|T_i, A_i)$. Finally, the semiparametric efficient (SPE) estimators are
\begin{eqnarray}
\widehat{\TCF}_{1,\mathrm{SPE}}(c_1) &=& 1 - \frac{\sum\limits_{i=1}^{n} \mathrm{I}(T_i \ge c_1) \left\{\frac{V_i D_{1i}}{\hat{\pi}_i} - \frac{\hat{\rho}_{1i}(V_i - \hat{\pi}_i)}{\hat{\pi}_i} \right\}} {\sum\limits_{i=1}^{n} \left\{\frac{V_i D_{1i}}{\hat{\pi}_i} - \frac{\hat{\rho}_{1i}(V_i - \hat{\pi}_i)}{\hat{\pi}_i} \right\}} \nonumber, \\
\widehat{\TCF}_{2,\mathrm{SPE}}(c_1,c_2) &=&  \frac{\sum\limits_{i=1}^{n} \mathrm{I}(c_1 \le T_i < c_2) \left\{\frac{V_i D_{2i}}{\hat{\pi}_i} - \frac{\hat{\rho}_{2i}(V_i - \hat{\pi}_i)}{\hat{\pi}_i} \right\}}{\sum\limits_{i=1}^{n} \left\{\frac{V_i D_{2i}}{\hat{\pi}_i} - \frac{\hat{\rho}_{2i}(V_i - \hat{\pi}_i)}{\hat{\pi}_i} \right\}}, \label{est:spe3} \\
\widehat{\TCF}_{3,\mathrm{SPE}}(c_2) &=&  \frac{\sum\limits_{i=1}^{n} \mathrm{I}(T_i \ge c_2) \left\{\frac{V_i D_{3i}}{\hat{\pi}_i} - \frac{\hat{\rho}_{3i}(V_i - \hat{\pi}_i)}{\hat{\pi}_i} \right\}} {\sum\limits_{i=1}^{n} \left\{\frac{V_i D_{3i}}{\hat{\pi}_i} - \frac{\hat{\rho}_{3i}(V_i - \hat{\pi}_i)}{\hat{\pi}_i} \right\}}. \nonumber
\end{eqnarray}
%FI, MSI, IPW and SPE 
Estimators (\ref{est:fi})-(\ref{est:spe3}) represent an extension to the three-classes problem of the estimators proposed in \cite{alo:05}. SPE estimators are also known to be doubly robust estimators, in the sense that they are consistent if either the $\rho_{ki}$'s or the $\pi_{i}$'s are estimated consistently. However, SPE estimates could fall outside the interval $(0,1)$. This happens because the quantities $V_i D_{ki}\hat{\pi}_i^{-1} - \hat{\rho}_{ki}(V_i - \hat{\pi}_i)\hat{\pi}_i^{-1}$ can be negative.

\section{Nonparametric estimators}\label{s:prop}
\subsection{The proposed method}
All the verification bias-corrected estimators of $\TCF_1(c_1), \TCF_2(c_1,c_2)$ and $\TCF_3(c_2)$ revised in the previous section belong to the class of (partially) parametric estimators, i.e., they need regression models to estimate $\rho_{ki} = \Pro(D_{ki} = 1|T_i, A_i)$ and/or $\pi_i = \Pro(V_i = 1|T_i, A_i)$. 
%If these models are specification, the FI, MSI IPW and SPE approaches give the consistent estimators; and if one of two models is misspecification, SPE approaches still work effectively. However, the parametric estimators make the worst results when the regression models are incorrect. This shows the strong effect of parametric models to the behavior of the estimators.
%To avoid misspecification problems, 
In what follows, we propose a fully nonparametric approach to the estimation of $\TCF_1(c_1), \TCF_2(c_1,c_2)$ and $\TCF_3(c_2)$. Our approach is based on the K-nearest neighbor (KNN) imputation method. Hereafter, we shall assume that $A$ is a continuous random variable.
%estimator of the mean of the response variable as discussed in \cite{cheng:94} and \cite{ning:12}, also, it is an extension of the estimation in \cite{adi:15}.

Recall that the true disease status is a trinomial random vector $D = (D_1,D_2,D_3)$ such that $D_k$ is a $n$ Bernoulli trials with success probability $\theta_k = \Pro(D_k = 1)$. Note that $\theta_1 + \theta_2 + \theta_3 = 1$. Let $\beta_{jk} = \Pro(T \ge c_j, D_k = 1)$ with $j = 1,2$ and $k = 1,2,3$.
Since parameters $\theta_k$ are the means of the random variables $D_{k}$,  we can use the KNN estimation procedure discussed in (\cite{ning:12}) to obtain nonparametric estimates $\hat{\theta}_{k,\mathrm{KNN}}$. More precisely, we define
\begin{equation}
\hat{\theta}_{k,\mathrm{KNN}} = \frac{1}{n}\sum_{i=1}^{n}\left[V_i D_{ki} + (1-V_i)\hat{\rho}_{ki,K}\right], \qquad K \in \mathbb{N},
\nonumber \label{est:knn1}
\end{equation}
where $\hat{\rho}_{ki,K} = \dfrac{1}{K}\sum\limits_{l=1}^{K}D_{ki(l)}$, and $\left\{(T_{i(l)},A_{i(l)},D_{ki(l)}): V_{i(l)} = 1, l = 1,\ldots,K\right\}$ is a set of $K$ observed data pairs and $(T_{i(l)},A_{i(l)})$ denotes the $j$-th nearest neighbor to $(T_i,A_i)$ among all $(T,A)$'s corresponding to the verified patients, i.e., to those $D_{kh}$'s with $V_h = 1$. Similarly, we can define the KNN estimates of $\beta_{jk}$ as follows
\begin{equation}
\hat{\beta}_{jk,\mathrm{KNN}} = \frac{1}{n}\sum_{i=1}^{n}\mathrm{I}(T_i \ge c_j)\left[V_iD_{ki} + (1-V_i)\hat{\rho}_{ki,K}\right],
\nonumber \label{est:knn2}
\end{equation}
each $j,k$. Therefore, the KNN imputation estimators for $\TCF_k$ are
\begin{align}
\widehat{\TCF}_{1,\mathrm{KNN}}(c_1) &= 1 - \frac{\hat{\beta}_{11}}{\hat{\theta}_1} =  \frac{\sum\limits_{i=1}^{n}\mathrm{I}(T_i < c_1)\left[V_i D_{1i} + (1 - V_i) \hat{\rho}_{1i,K}\right]}{\sum\limits_{i=1}^{n}\left[V_i D_{1i} + (1 - V_i) \hat{\rho}_{1i,K}\right]} \nonumber, \displaybreak[3] \\
\widehat{\TCF}_{2,\mathrm{KNN}}(c_1,c_2) &= \frac{\hat{\beta}_{12} - \hat{\beta}_{22}}{\hat{\theta}_2} = \frac{\sum\limits_{i=1}^{n}\mathrm{I}(c_1 \le T_i < c_2)\left[V_i D_{2i} + (1 - V_i) \hat{\rho}_{2i,K}\right]}{\sum\limits_{i=1}^{n}\left[V_i D_{2i} + (1 - V_i) \hat{\rho}_{2i,K}\right]},  \label{est:knn3}  \\
\widehat{\TCF}_{3,\mathrm{KNN}}(c_2) &= \frac{\hat{\beta}_{23}}{\hat{\theta}_3} =  \frac{\sum\limits_{i=1}^{n}\mathrm{I}(T_i \ge c_2)\left[V_i D_{3i} + (1 - V_i) \hat{\rho}_{3i,K}\right]}{\sum\limits_{i=1}^{n}\left[V_i D_{3i} + (1 - V_i) \hat{\rho}_{3i,K}\right]}. \nonumber
\end{align}

\subsection{Asymptotic distribution}
Let $\rho_k(t,a) = \Pro(D_k = 1|T=t,A=a)$ and $\pi(t,a) = \Pro(V=1|T=t,A=a)$. 
%Using the result and technical proof of Theorem 1 in \cite{ning:12}, we verify that 
The KNN imputation estimators of $\TCF_1(c_1), \TCF_2(c_1,c_2)$ and $\TCF_3(c_2)$ are consistent and asymptotically normal. In fact, we have the following theorems.
\begin{theorem}\label{thr:knn:1}
Assume the functions $\rho_k(t,a)$ and $\pi(t,a)$ are finite and first-order differentiable. Moreover, assume that the expectation of $1/\pi(T,A)$ exists. Then, for a fixed pair cut of points $(c_1,c_2)$ such that $c_1 < c_2$, the KNN imputation estimators $\widehat{\TCF}_{1,\mathrm{KNN}}(c_1)$, $\widehat{\TCF}_{2,\mathrm{KNN}}(c_1,c_2)$ and $\widehat{\TCF}_{3,\mathrm{KNN}}(c_2)$ are consistent.
\end{theorem}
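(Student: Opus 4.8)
The plan is to prove consistency of each $\widehat{\TCF}_{k,\mathrm{KNN}}$ by showing that numerator and denominator converge in probability to the corresponding population quantities $\beta_{jk}$ and $\theta_k$, and then invoke Slutsky's theorem together with the continuous mapping theorem (noting $\theta_k>0$ for each $k$). Since $\widehat{\TCF}_{1,\mathrm{KNN}}=1-\hat\beta_{11}/\hat\theta_1$, $\widehat{\TCF}_{2,\mathrm{KNN}}=(\hat\beta_{12}-\hat\beta_{22})/\hat\theta_2$, and $\widehat{\TCF}_{3,\mathrm{KNN}}=\hat\beta_{23}/\hat\theta_3$, it suffices to establish that $\hat\theta_{k,\mathrm{KNN}}\xrightarrow{p}\theta_k$ and $\hat\beta_{jk,\mathrm{KNN}}\xrightarrow{p}\beta_{jk}$ for all relevant $j,k$.

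First I would write the imputation estimator $\hat\theta_{k,\mathrm{KNN}}=\frac1n\sum_i[V_iD_{ki}+(1-V_i)\hat\rho_{ki,K}]$ and decompose it as an ``oracle'' term plus a remainder. The oracle term replaces $\hat\rho_{ki,K}$ by the true regression function $\rho_k(T_i,A_i)$: under the MAR assumption $\E[V_iD_{ki}+(1-V_i)\rho_k(T_i,A_i)]=\E[\rho_k(T_i,A_i)]=\E[D_{ki}]=\theta_k$, so the oracle average converges to $\theta_k$ by the weak law of large numbers (using that $\E[1/\pi(T,A)]<\infty$ is not even needed here, though it will be used to control the remainder). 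The remainder is $\frac1n\sum_i(1-V_i)[\hat\rho_{ki,K}-\rho_k(T_i,A_i)]$, and the key is to show this is $o_p(1)$. This is precisely where the KNN machinery of Ning and Cheng (2012) enters: under the smoothness hypotheses ($\rho_k$ finite and first-order differentiable) and the existence of $\E[1/\pi(T,A)]$, which guarantees enough verified neighbors near every point, one has $\hat\rho_{ki,K}-\rho_k(T_i,A_i)\to 0$ appropriately (with $K\to\infty$, $K/n\to 0$, as is implicit in their framework), so that the averaged bias and variance of the NN-imputed values vanish. I would cite their consistency results for NN-imputation of a conditional mean to get $\frac1n\sum_i(1-V_i)\hat\rho_{ki,K}\xrightarrow{p}\E[(1-V)\rho_k(T,A)]$.

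The same decomposition handles $\hat\beta_{jk,\mathrm{KNN}}=\frac1n\sum_i\mathrm{I}(T_i\ge c_j)[V_iD_{ki}+(1-V_i)\hat\rho_{ki,K}]$: the oracle term has expectation $\E[\mathrm{I}(T\ge c_j)\rho_k(T,A)]=\E[\mathrm{I}(T\ge c_j)D_k]=\Pro(T\ge c_j,D_k=1)=\beta_{jk}$ by MAR, and converges by the weak law; the remainder $\frac1n\sum_i\mathrm{I}(T_i\ge c_j)(1-V_i)[\hat\rho_{ki,K}-\rho_k(T_i,A_i)]$ is $o_p(1)$ by the same NN argument, since bounding by the indicator only shrinks the error. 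Combining, $\hat\beta_{12,\mathrm{KNN}}-\hat\beta_{22,\mathrm{KNN}}\xrightarrow{p}\beta_{12}-\beta_{22}=\Pro(c_1\le T<c_2,D_2=1)$, and dividing by $\hat\theta_{2,\mathrm{KNN}}\xrightarrow{p}\theta_2>0$ gives $\widehat{\TCF}_{2,\mathrm{KNN}}\xrightarrow{p}\Pro(c_1<T<c_2\mid\mathrm{class}\,2)=\TCF_2(c_1,c_2)$; analogously for $\TCF_1$ and $\TCF_3$.

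The main obstacle is the remainder term $\frac1n\sum_i(1-V_i)[\hat\rho_{ki,K}-\rho_k(T_i,A_i)]=o_p(1)$: one must control simultaneously the \emph{bias} of NN-imputation (which is where first-order differentiability of $\rho_k$ is used, giving a bound of order the typical nearest-neighbor distance) and its \emph{variance} (which is where $K\to\infty$ is needed), while also ensuring that the verified sample is dense enough near each query point $(T_i,A_i)$ — this last point is exactly what $\E[1/\pi(T,A)]<\infty$ buys us, since it rules out regions of positive mass where verification probability degenerates to zero. I expect the cleanest route is to quote the relevant lemma from \cite{ning:12} verbatim (their NN-imputation estimator of a mean under MAR) rather than re-derive these bounds, so that the proof here reduces to checking that our hypotheses imply theirs and then assembling the pieces via Slutsky.
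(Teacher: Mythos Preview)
Your overall approach matches the paper's: decompose $\hat\beta_{jk,\mathrm{KNN}}-\beta_{jk}$ into an oracle piece plus the imputation remainder, appeal to \cite{ning:12} for the remainder, and finish with Slutsky/continuous mapping. The paper writes the decomposition explicitly as $S_{jk}+R_{jk}+T_{jk}$ (your ``oracle'' minus $\beta_{jk}$ is $S_{jk}+R_{jk}$, your ``remainder'' is $T_{jk}$), then shows $T_{jk}=W_{jk}+o_p(n^{-1/2})$ with $\E W_{jk}=0$ and $\aV(\sqrt{n}W_{jk})$ finite, so $W_{jk}\xrightarrow{p}0$ by Markov's inequality.

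One point does need correcting, however. In this paper (and in \cite{ning:12}) $K$ is a \emph{fixed} integer --- the simulations use $K=1,3$ and all the asymptotic variance formulas retain $1/K$ terms. Your parenthetical ``with $K\to\infty$, $K/n\to 0$'' and the claim that $\hat\rho_{ki,K}-\rho_k(T_i,A_i)\to 0$ are incorrect in this setting: for $K=1$, $\hat\rho_{ki,1}=D_{ki(1)}\in\{0,1\}$ never approaches $\rho_k(T_i,A_i)$. The remainder $\frac1n\sum_i \mathrm{I}(T_i\ge c_j)(1-V_i)[\hat\rho_{ki,K}-\rho_{ki}]$ vanishes not because each summand is small, but because (i) the bias part (replacing $\rho_{ki(l)}$ by $\rho_{ki}$) is controlled by shrinking nearest-neighbor distances under first-order differentiability of $\rho_k$, and (ii) the stochastic part $W_{jk}$ is a mean-zero average whose variance is $O(1/n)$ even for fixed $K$ --- this is where $\E[1/\pi(T,A)]<\infty$ enters. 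Since you plan to quote the lemma directly the proof still goes through, but the heuristic you offer for why it holds would not.
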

\begin{proof}
Since the disease status $D_{k}$ is a Bernoulli random variable, its second-order moment, $\E (D_{k}^2)$, is finite. According to the first assumption, we can show that the conditional variance of $D_{k}$ given the test results $T$ and $A$, $\V (D_k | T = t, A = a)$ is equal to $\rho_{k}(t,a)\left[1 - \rho_{k}(t,a)\right]$ and is clearly finite. Thus, by an application of Theorem 1 in \cite{ning:12}, the KNN imputation estimators $\hat{\theta}_{k,\mathrm{KNN}}$ are consistent. %This result is followed Theorem 1 in \cite{ning:12}.

Now, observe that,
\begin{eqnarray}
\hat{\beta}_{jk,\mathrm{KNN}} - \beta_{jk} &=& \frac{1}{n}\sum_{i=1}^{n}\mathrm{I}(T_i \ge c_j)\left[V_iD_{ki} + (1-V_i)\rho_{ki}\right] + \frac{1}{n}\sum_{i=1}^{n}\mathrm{I}(T_i \ge c_j) (1-V_i)(\hat{\rho}_{ki,K} - \rho_{ki}) - \beta_{jk} \nonumber\\
&=& \frac{1}{n}\sum_{i=1}^{n}\mathrm{I}(T_i \ge c_j)V_i\left[D_{ki} - \rho_{ki}\right] + \frac{1}{n}\sum_{i=1}^{n}\left[\mathrm{I}(T_i \ge c_j)\rho_{ki} - \beta_{jk}\right]\nonumber \\
&& + \: \frac{1}{n}\sum_{i=1}^{n}\mathrm{I}(T_i \ge c_j) (1-V_i)(\hat{\rho}_{ki,K} - \rho_{ki})\nonumber \\
&=& S_{jk} + R_{jk} + T_{jk}. \nonumber
\end{eqnarray}
Here, the quantities $R_{jk}, S_{jk}$ and $T_{jk}$ are similar to the quantities $R,S$ and $T$ in the proof of Theorem 2.1 in \cite{cheng:94} and Theorem 1 in \cite{ning:12}. Thus, we have that
\[
\sqrt{n}R_{jk} \stackrel{d}{\to}\mathcal{N}\left(0,\V\left[\mathrm{I}(T \ge c_j)\rho_k(T,A)\right]\right) \qquad \text{and} \qquad \sqrt{n}S_{jk} \stackrel{d}{\to} \mathcal{N}\left(0,\E \left[\pi(T,A)\delta^2_{jk}(T,A)\right]\right),
\]
where $\delta_{jk}^2(T,A)$ is the conditional variance of $\mathrm{I}(T \ge c_j,D_{k} = 1)$ given $T,A$. Also, by using a similar technique to that of proof of Theorem 1 in \cite{ning:12}, we get $T_{jk} = W_{jk} + o_{p}(n^{-1/2})$, where
\[
W_{jk} = \frac{1}{n}\sum_{i=1}^{n}\mathrm{I}(T_i \ge c_j)(1-V_i)\left[\frac{1}{K}\sum_{l=1}^{K}\left(V_{i(l)}D_{ki(l)} - \rho_{ki(l)}\right)\right].
\]
Moreover, $\E(W_{jk}) = 0$ and
\[
%n\sigma^2_{
\aV(\sqrt{n}W_{jk}) = \frac{1}{K}\E \left[(1 - \pi(T,A))\delta_{jk}^2(T,A)\right] + \E \left[\frac{(1-\pi(T,A))^2\delta_{jk}^2(T,A)}{\pi(T,A)}\right].
\]
Then, the Markov's inequality implies that $W_{jk} \stackrel{p}{\to} 0$ as $n$ goes to infinity. This, together with the fact that $R_{jk}$ and $S_{jk}$ converge in probability to zero, leads to the consistency of $\hat{\beta}_{jk,\mathrm{KNN}}$, i.e, $\hat{\beta}_{jk,\mathrm{KNN}} \stackrel{p}{\to} \beta_{jk}$. It follows that  $\widehat{\TCF}_{1,\mathrm{KNN}}(c_1) = 1 - \frac{\hat{\beta}_{11}}{\hat{\theta}_1}$, $\widehat{\TCF}_{2,\mathrm{KNN}}(c_1,c_2) = \frac{\hat{\beta}_{12} - \hat{\beta}_{22}}{\hat{\theta}_2}$ and $\widehat{\TCF}_{3,\mathrm{KNN}}(c_2) = \frac{\hat{\beta}_{23}}{\hat{\theta}_3}$ are consistent.
\end{proof}
%Furthermore, we obtain the asymptotic normal distribution of these estimates.
\begin{theorem}\label{thr:knn:2}
Assume that the conditions in Theorem \ref{thr:knn:1} hold, we get
\begin{equation}
\sqrt{n}\left[\begin{pmatrix}
\widehat{\TCF}_{1,\mathrm{KNN}}(c_1) \\ \widehat{\TCF}_{2,\mathrm{KNN}}(c_1,c_2) \\
\widehat{\TCF}_{3,\mathrm{KNN}}(c_2)
\end{pmatrix} - \begin{pmatrix}
\TCF_{1}(c_1) \\ \TCF_{2}(c_1,c_2) \\
\TCF_{3}(c_2)
\end{pmatrix}\right] \stackrel{d}{\to} \mathcal{N}(0,\Xi),
\label{asy:knn}
\end{equation}
%The asymptotic covariance matrix $\Xi$ will be computed by using an estimation procedure or bootstrap resampling method.
where $\Xi$ is a suitable matrix.
\end{theorem}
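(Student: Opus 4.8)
The plan is to first establish joint asymptotic normality of the seven ``building blocks'' $\hat{\beta}_{11},\hat{\beta}_{12},\hat{\beta}_{22},\hat{\beta}_{23}$ and $\hat{\theta}_{1},\hat{\theta}_{2},\hat{\theta}_{3}$ (all with subscript $\mathrm{KNN}$), and then to transfer it to the three TCF estimators by the multivariate delta method applied to the map
\[
g(b_{11},b_{12},b_{22},b_{23},t_1,t_2,t_3)=\left(1-\frac{b_{11}}{t_1},\ \frac{b_{12}-b_{22}}{t_2},\ \frac{b_{23}}{t_3}\right),
\]
which, by \eqref{est:knn3}, sends $\hat{\mathbf{q}}=(\hat{\beta}_{11},\hat{\beta}_{12},\hat{\beta}_{22},\hat{\beta}_{23},\hat{\theta}_{1},\hat{\theta}_{2},\hat{\theta}_{3})^{\top}$ (all $\mathrm{KNN}$) to $(\widehat{\TCF}_{1,\mathrm{KNN}},\widehat{\TCF}_{2,\mathrm{KNN}},\widehat{\TCF}_{3,\mathrm{KNN}})^{\top}$.

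First I would collect the asymptotically linear representations already produced in the proof of Theorem~\ref{thr:knn:1}: there it is shown that $\sqrt{n}\,(\hat{\beta}_{jk,\mathrm{KNN}}-\beta_{jk})=\sqrt{n}\,(R_{jk}+S_{jk}+W_{jk})+o_p(1)$, and the same argument, specialised by replacing $\mathrm{I}(T_i\ge c_j)$ by $1$ (this is precisely Theorem~1 of \cite{ning:12}), gives $\sqrt{n}\,(\hat{\theta}_{k,\mathrm{KNN}}-\theta_k)=\sqrt{n}\,(R_{k}+S_{k}+W_{k})+o_p(1)$ with $R_k,S_k,W_k$ the obvious indicator--free analogues. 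Note that in these estimators only $\rho_k$ is imputed by nearest neighbours, while $\mathrm{I}(T_i\ge c_j)$ multiplies the whole bracket at the query point $T_i$; hence no discontinuous quantity is being smoothed, and the expansions hold under the stated first-order differentiability of $\rho_k$ and $\pi$ without further regularity near $c_1,c_2$.

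Next I would prove $\sqrt{n}\,(\hat{\mathbf{q}}-\mathbf{q})\stackrel{d}{\to}\mathcal{N}(0,\Sigma)$, where $\mathbf{q}=(\beta_{11},\beta_{12},\beta_{22},\beta_{23},\theta_{1},\theta_{2},\theta_{3})^{\top}$. By the Cram\'er--Wold device it suffices to show that every fixed linear combination of the seven centred and scaled estimators is asymptotically normal; by the previous paragraph such a combination equals $\sqrt{n}$ times a linear combination of the $R$'s, $S$'s and $W$'s, up to $o_p(1)$. The $R$- and $S$-parts together form an average of i.i.d.\ mean--zero summands, so the classical central limit theorem applies; the $W$-part is a linear combination of nearest-neighbour sums that share the \emph{same} neighbour structure, hence is itself of the type handled in \cite{cheng:94} and \cite{ning:12}, and is asymptotically normal. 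The limiting covariance matrix $\Sigma$ is then read off from the joint second moments of the three families: the $R$-family contributes $\Cov[\mathrm{I}(T\ge c_j)\rho_k(T,A),\,\mathrm{I}(T\ge c_{j'})\rho_{k'}(T,A)]$, the $S$-family contributes $\E[\pi(T,A)\,\mathrm{I}(T\ge c_j)\mathrm{I}(T\ge c_{j'})\Cov(D_k,D_{k'}\mid T,A)]$, and the $W$-family contributes the two kinds of terms already appearing in the proof of Theorem~\ref{thr:knn:1}, a $K^{-1}\E[(1-\pi)\,\delta_{jk}^2(T,A)]$-type term and an $\E[(1-\pi)^2\delta_{jk}^2(T,A)/\pi]$-type term, with the evident cross-index versions obtained by replacing the conditional variance $\delta_{jk}^2(T,A)$ by the conditional covariance of $\mathrm{I}(T\ge c_j)D_k$ and $\mathrm{I}(T\ge c_{j'})D_{k'}$ given $(T,A)$. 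Several cross-covariances vanish: $R$ is a function of the $(T,A)$'s alone, while both $S$ and $W$ have conditional mean zero given all the $(T,A)$'s (for $W$ this uses the MAR identity $\E[D_k\mid V,T,A]=\rho_k$ at the verified neighbours), so $R$ is orthogonal to $S$ and to $W$; the $S$--$W$ cross term is nonzero, because a verified observation may enter $S$ directly and also enter $W$ as a neighbour of unverified units, and it is evaluated by the same neighbour-counting argument used in Theorem~\ref{thr:knn:1} for $\aV(\sqrt{n}\,W_{jk})$. Collecting these entries yields $\Sigma$ in closed form.

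Finally, since $\theta_k=\Pro(D_k=1)>0$ for $k=1,2,3$, the map $g$ is continuously differentiable in a neighbourhood of $\mathbf{q}$, and the delta method yields \eqref{asy:knn} with $\Xi=\dot{g}(\mathbf{q})\,\Sigma\,\dot{g}(\mathbf{q})^{\top}$, where $\dot{g}(\mathbf{q})$ is the $3\times7$ Jacobian whose rows are $\bigl(-1/\theta_1,0,0,0,\ \beta_{11}/\theta_1^{2},0,0\bigr)$, $\bigl(0,1/\theta_2,-1/\theta_2,0,\ 0,-(\beta_{12}-\beta_{22})/\theta_2^{2},0\bigr)$ and $\bigl(0,0,0,1/\theta_3,\ 0,0,-\beta_{23}/\theta_3^{2}\bigr)$. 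The main obstacle is the middle step: not the asymptotic normality itself, which is inherited from \cite{ning:12}, but the careful bookkeeping of all entries of $\Sigma$, especially the cross-covariances produced by the nearest-neighbour imputation term $W$, where one must track how often and with which weight a given verified observation is used as a neighbour of the unverified units, and for which pairs of indices $(j,k)$ and $(j',k')$ these contributions coincide.
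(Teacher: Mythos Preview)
Your proposal is correct and follows essentially the same route as the paper: establish the $R+S+W$ asymptotic expansion for each building block (borrowed from Theorem~\ref{thr:knn:1} and \cite{ning:12}), deduce joint asymptotic normality, and apply the multivariate delta method. The only cosmetic difference is that you work with the seven-vector $(\hat{\beta}_{11},\hat{\beta}_{12},\hat{\beta}_{22},\hat{\beta}_{23},\hat{\theta}_{1},\hat{\theta}_{2},\hat{\theta}_{3})$ and a $3\times 7$ Jacobian, whereas the paper eliminates $\hat{\theta}_{3}$ via the identity $\hat{\theta}_{3}=1-\hat{\theta}_{1}-\hat{\theta}_{2}$ and uses a $3\times 6$ Jacobian; since your $\Sigma$ is singular along that constraint, the two formulations yield the same $\Xi$.
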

\begin{proof}
A direct application of Theorem 1 in \cite{ning:12} gives the result that the quantity $\sqrt{n}(\hat{\theta}_{k,\mathrm{KNN}} - \theta_k)$ converges to a normal random variable with mean $0$ and variance $\sigma^2_{k} = \left[\theta_k(1-\theta_k) + \omega_{k}^2\right]$. Here,
\begin{eqnarray}
\omega_{k}^2 &=& \left(1 + \frac{1}{K}\right)\E \left[\rho_k(T,A)(1-\rho_k(T,A))(1-\pi(T,A))\right] \nonumber\\
&& + \: \E \left[\frac{\rho_k(T,A)(1-\rho_k(T,A))(1-\pi(T,A))^2}{\pi(T,A)}\right]. \label{omega_k}
\end{eqnarray}
In addition, from the proof of Theorem \ref{thr:knn:1}, we have
\[
\hat{\beta}_{jk,\mathrm{KNN}} - \beta_{jk} \simeq S_{jk} + R_{jk} + W_{jk} + o_{p}(n^{-1/2}),
\]
with
\[
\sqrt{n}R_{jk} \stackrel{d}{\to}\mathcal{N}\left(0,\V\left[\mathrm{I}(T \ge c_j)\rho_k(T,A)\right]\right), \quad \sqrt{n}S_{jk} \stackrel{d}{\to} \mathcal{N}\left(0,\E \left[\pi(T,A)\delta^2_{jk}(T,A)\right]\right)
\]
and
\[
\sqrt{n}W_{jk} \stackrel{d}{\to} \mathcal{N}(0,\sigma^2_{W_{jk}}).
\]
Therefore, $\sqrt{n}(\hat{\beta}_{jk,\mathrm{KNN}} - \beta_{jk}) \stackrel{d}{\to} \mathcal{N}(0,\sigma_{jk}^2)$. Here, the asymptotic variance $\sigma^2_{jk}$ is obtained by
\[
\sigma_{jk}^2 = %\frac{1}{n}
\left[\beta_{jk}\left(1 - \beta_{jk}\right) + \omega_{jk}^2\right],
\]
with
\begin{eqnarray}
\omega_{jk}^2 &=& \left(1 + \frac{1}{K}\right)\E \left[\mathrm{I}(T \ge c_j)\rho_k(T,A)(1-\rho_k(T,A))(1-\pi(T,A))\right] \nonumber\\
&& + \: \E \left[\frac{\mathrm{I}(T \ge c_j)\rho_k(T,A)(1-\rho_k(T,A))(1-\pi(T,A))^2}{\pi(T,A)}\right]. \label{omega_jk}
\end{eqnarray}
This result follows by the fact that $R_{jk}$ and $S_{jk} + W_{jk}$ are uncorrelated and the asymptotic covariance between $S_{jk}$ and $W_{jk}$ is obtained by
\[
\aCov\left(S_{jk},W_{jk}\right) = \E \left[(1 - \pi(T,A))\delta_{jk}^2(T,A)\right].
\]
Moreover, we get that the vector $\sqrt{n}(\hat{\theta}_{1,\mathrm{KNN}},\hat{\theta}_{2,\mathrm{KNN}},\hat{\beta}_{11,\mathrm{KNN}},\hat{\beta}_{12,\mathrm{KNN}},\hat{\beta}_{22,\mathrm{KNN}},\hat{\beta}_{23,\mathrm{KNN}})^\top$ is (jointly) asymptotically normally distributed with mean vector $(\theta_1,\theta_2,\beta_{11},\beta_{12},\beta_{22},\beta_{23})^\top$ and suitable covariance matrix $\Xi^*$. Then,  result \eqref{asy:knn} follows by applying the multivariate delta method to
\begin{equation}%(\widehat{\TCF}_1,\widehat{\TCF}_2,\widehat{\TCF}_3)=
h(\hat{\theta}_{1},\hat{\theta}_{2},\hat{\beta}_{11},\hat{\beta}_{12},\hat{\beta}_{22},\hat{\beta}_{23})= \left(1 - \frac{\hat\beta_{11}}{\hat\theta_{1}}, \frac{(\hat\beta_{12} - \hat\beta_{22})}{\hat\theta_{2}},\frac{\hat\beta_{23}}{(1 - \hat\theta_{1} - \hat\theta_{2})}\right).
\nonumber \label{asy_h:knn}
\end{equation}
 The asymptotic covariance matrix of $\sqrt{n}(\widehat{\TCF}_{1,\mathrm{KNN}},\widehat{\TCF}_{2,\mathrm{KNN}},\widehat{\TCF}_{3,\mathrm{KNN}})^\top$, $\Xi$, is obtained by
\begin{equation}
\Xi = h'\Xi^*h'^\top,
\label{asym_var:knn1}
\end{equation}
where $h'$ is the first-order derivative of $h$, i.e.,
%and $\Xi^*$ is asymptotic covariance matrix of KNN imputation estimates $(\hat{\theta}_{1,\mathrm{KNN}},\hat{\theta}_{2,\mathrm{KNN}},\hat{\beta}_{11,\mathrm{KNN}},\hat{\beta}_{12,\mathrm{KNN}},\hat{\beta}_{22,\mathrm{KNN}},\hat{\beta}_{23,\mathrm{KNN}})^\top$. With the form of $h$ as \eqref{asy_h:knn}, we obtain
\begin{equation}
h' = \begin{pmatrix}
\frac{\beta_{11}}{\theta_{1}^2} & 0 & -\frac{1}{\theta_{1}} & 0 & 0 & 0 \\
0 & -\frac{(\beta_{12} - \beta_{22})}{\theta_{2}^2} & 0 & \frac{1}{\theta_{2}} & -\frac{1}{\theta_{2}} & 0\\
\frac{\beta_{23}}{(1 - \theta_{1} - \theta_{2})^2} & \frac{\beta_{23}}{(1 - \theta_{1} - \theta_{2})^2} & 0 & 0 & 0 & \frac{1}{(1 - \theta_{1} - \theta_{2})}
\end{pmatrix}.
\label{asym_var:knn2}
\end{equation}
\end{proof}

\subsection{The asymptotic covariance matrix}
\label{asy_var:knn}
Let
\[
\Xi = 
\begin{pmatrix}
\xi_1^2 & \xi_{12} & \xi_{13} \\
\xi_{12} & \xi_2^2 & \xi_{23} \\
\xi_{13} & \xi_{23} & \xi^2_3
\end{pmatrix}
.
\]
%\paragraph{Computation for asymptotic variance of KNN imputation estimator:} 
%As the proof in Theorem \ref{thr:knn:2}, the asymptotic covariance matrix of $(\widehat{\TCF}_{1,\mathrm{KNN}},\widehat{\TCF}_{2,\mathrm{KNN}},\widehat{\TCF}_{3,\mathrm{KNN}})^\top$, $\Xi$, is obtained by
%\begin{equation}
%\Xi = h'\Xi^*h'^\top,
%\label{asym_var:knn1}
%\end{equation}
%where, $h'$ is the first-order derivative of function $h$ and $\Xi^*$ is asymptotic covariance matrix of KNN imputation estimates $(\hat{\theta}_{1,\mathrm{KNN}},\hat{\theta}_{2,\mathrm{KNN}},\hat{\beta}_{11,\mathrm{KNN}},\hat{\beta}_{12,\mathrm{KNN}},\hat{\beta}_{22,\mathrm{KNN}},\hat{\beta}_{23,\mathrm{KNN}})^\top$. With the form of $h$ as \eqref{asy_h:knn}, we obtain
%\begin{equation}
%h' = \begin{pmatrix}
%\frac{\beta_{11}}{\theta_{1}^2} & 0 & -\frac{1}{\theta_{1}} & 0 & 0 & 0 \\
%0 & -\frac{(\beta_{12} - \beta_{22})}{\theta_{2}^2} & 0 & \frac{1}{\theta_{2}} & -\frac{1}{\theta_{2}} & 0\\
%\frac{\beta_{23}}{(1 - \theta_{1} - \theta_{2})^2} & \frac{\beta_{23}}{(1 - \theta_{1} - \theta_{2})^2} & 0 & 0 & 0 & \frac{1}{(1 - \theta_{1} - \theta_{2})}
%\end{pmatrix}
%\label{asym_var:knn2}
%\end{equation}
The asymptotic covariance matrix $\Xi^*$ is a $6 \times 6$ matrix such that its diagonal elements are the asymptotic variances of $\sqrt{n}\hat{\theta}_{k,\mathrm{KNN}}$ and $\sqrt{n}\hat{\beta}_{jk,\mathrm{KNN}}$.
Let us define $\sigma_{12}^* = \aCov(\sqrt{n}\hat{\theta}_{1,\mathrm{KNN}},\sqrt{n}\hat{\theta}_{2,\mathrm{KNN}})$, $\sigma_{sjk} = \aCov(\sqrt{n}\hat{\theta}_{s,\mathrm{KNN}},\sqrt{n}\hat{\beta}_{jk,\mathrm{KNN}})$ and $\sigma_{jkls} = \aCov(\sqrt{n}\hat{\beta}_{jk,\mathrm{KNN}},\sqrt{n}\hat{\beta}_{ls,\mathrm{KNN}})$. We write
\begin{equation}
\Xi^* = \begin{pmatrix}
\sigma_1^2 & \sigma_{12}^* & \sigma_{111} & \sigma_{112} & \sigma_{122} & \sigma_{123} \\
\sigma_{12}^* & \sigma_2^2 & \sigma_{211} & \sigma_{212} & \sigma_{222} & \sigma_{223} \\
\sigma_{111} & \sigma_{211} & \sigma_{11}^2 & \sigma_{1112} & \sigma_{1122} & \sigma_{1123} \\
\sigma_{112} & \sigma_{212} & \sigma_{1112} & \sigma_{12}^2 & \sigma_{1222} & \sigma_{1223} \\
\sigma_{122} & \sigma_{222} & \sigma_{1122} & \sigma_{1222} & \sigma_{22}^2 & \sigma_{2223} \\
\sigma_{123} & \sigma_{223} & \sigma_{1123} & \sigma_{1223} & \sigma_{2223} & \sigma_{23}^2
\end{pmatrix}.
\nonumber \label{asym_var:knn3}
\end{equation}
%Compute the expression (\ref{asym_var:knn1}), getting
Hence, from (\ref{asym_var:knn1}) and (\ref{asym_var:knn2}),
\begin{eqnarray}
\xi_1^2=\aV\left(\sqrt{n}\widehat{\TCF}_{1,\mathrm{KNN}}(c_1)\right) &=& \frac{\beta_{11}^2}{\theta_1^4}\sigma_1^2 + \frac{\sigma_{11}^2}{\theta_1^2} - 2\frac{\beta_{11}}{\theta_1^3}\sigma_{111}, \nonumber \label{aV:tcf1} \\
\xi_2^2=\aV\left(\sqrt{n}\widehat{\TCF}_{2,\mathrm{KNN}}(c_1,c_2)\right) &=& \sigma_2^2\frac{(\beta_{12} - \beta_{22})^2}{\theta_2^4} + \frac{\sigma_{12}^2 + \sigma_{22}^2 - 2\sigma_{1222}}{\theta_2^2} \nonumber\\
&& - \: 2\frac{\beta_{12} - \beta_{22}}{\theta_2^3}(\sigma_{212} - \sigma_{222}) , \nonumber \label{aV:tcf2} \\
\xi_3^2=\aV\left(\sqrt{n}\widehat{\TCF}_{3,\mathrm{KNN}}(c_2)\right) &=& \frac{\beta_{23}^2}{(1 - \theta_1 - \theta_2)^4}\left(\sigma_1^2 + 2\sigma_{12}^* + \sigma_2^2\right) + \frac{\sigma_{23}^2}{(1-\theta_1-\theta_2)^2} \nonumber\\
&& + \: 2\frac{\beta_{23}}{(1-\theta_1- \theta_2)^3}\left(\sigma_{123} + \sigma_{223}\right). \label{aV:tcf3}
\end{eqnarray}
%One can show that $\sigma_{12}^2 + \sigma_{22}^2 - 2\sigma_{1222} =
Let $\lambda^2=\aV(\sqrt{n}\hat{\beta}_{12,\mathrm{KNN}} -\sqrt{n}\hat{\beta}_{22,\mathrm{KNN}}) $. Hence, $\sigma_{12}^2 + \sigma_{22}^2 - 2\sigma_{1222}=\lambda^2$, and
%from (\ref{aV:tcf2}), leads to 
\begin{equation}
\xi_2^2=%\aV\left(\sqrt{n}\widehat{\TCF}_{2,\mathrm{KNN}}(c_1,c_2)\right) =
 \sigma_2^2\frac{(\beta_{12} - \beta_{22})^2}{\theta_2^4} + \frac{\lambda^2}{\theta_2^2} - 2\frac{\beta_{12} - \beta_{22}}{\theta_2^3}(\sigma_{212} - \sigma_{222}) \nonumber \label{aV:tcf2_2}.
\end{equation}
Observe that $\hat{\theta}_{3,\mathrm{KNN}} = 1 - (\hat{\theta}_{1,\mathrm{KNN}} + \hat{\theta}_{2,\mathrm{KNN}})$. Thus,
\begin{eqnarray}
\aV(\sqrt{n}\hat{\theta}_{3,\mathrm{KNN}}) &=& \aV(\sqrt{n}\hat{\theta}_{1,\mathrm{KNN}} + \sqrt{n}\hat{\theta}_{2,\mathrm{KNN}}) \nonumber\\
&=& \aV(\sqrt{n}\hat{\theta}_{1,\mathrm{KNN}}) + \aV(\sqrt{n}\hat{\theta}_{2,\mathrm{KNN}}) + 2\aCov(\sqrt{n}\hat{\theta}_{1,\mathrm{KNN}},\sqrt{n}\hat{\theta}_{2,\mathrm{KNN}}).
\nonumber
\end{eqnarray}
This leads to the expression $\sigma_3^2=\sigma_1^2 + 2\sigma^*_{12} + \sigma_2^2$. In addition, %it is easy to see that
\begin{eqnarray}
\sigma_{123} + \sigma_{223} &=& \aCov(\sqrt{n}\hat{\theta}_{1,\mathrm{KNN}},\sqrt{n}\hat{\beta}_{23,\mathrm{KNN}}) + \aCov(\sqrt{n}\hat{\theta}_{2,\mathrm{KNN}},\sqrt{n}\hat{\beta}_{23,\mathrm{KNN}}) \nonumber\\
&=& \aCov(\sqrt{n}\hat{\theta}_{1,\mathrm{KNN}} + \sqrt{n}\hat{\theta}_{2,\mathrm{KNN}},\sqrt{n}\hat{\beta}_{23,\mathrm{KNN}}) \nonumber \\
&=& - \aCov(\sqrt{n} - (\sqrt{n}\hat{\theta}_{1,\mathrm{KNN}} + \sqrt{n}\hat{\theta}_{2,\mathrm{KNN}}),\sqrt{n}\hat{\beta}_{23,\mathrm{KNN}}) \nonumber \\
&=& - \sigma_{323}.\nonumber
\end{eqnarray}
%obtained by
%\[\sigma^*_{12} = \frac{1}{2}\left(\sigma^2_3 - \sigma^2_1 - \sigma^2_2\right).\]
Therefore, from (\ref{aV:tcf3}), the asymptotic variance of $\sqrt{n}\widehat{\TCF}_{3,\mathrm{KNN}}(c_2)$ is 
\begin{equation}
\xi_3^2=%\aV\left(\sqrt{n}\widehat{\TCF}_{3,\mathrm{KNN}}(c_2)\right) 
 \frac{\beta_{23}^2 \sigma_3^2}{(1 - \theta_1 - \theta_2)^4} + \frac{\sigma_{23}^2}{(1-\theta_1-\theta_2)^2} - 2\frac{\beta_{23} \sigma_{323}}{(1-\theta_1- \theta_2)^3}. \nonumber \label{aV:tcf3_2}
\end{equation}
%Now, we study how to obtain the quantities $\sigma_k^2, \sigma_{jk}^2, \sigma_{kjk}$ and $\lambda^2$ in (\ref{aV:tcf1}), (\ref{aV:tcf2_2}) and (\ref{aV:tcf3_2}). 
Recall that 
%$\hat{\theta}_{k,\mathrm{KNN}}$ has asymptotic variance 
$\sigma_k^2 = \left[\theta_k(1 - \theta_k) + \omega_k^2\right]$ and 
$\sigma_{jk}^2=\left[\beta_{jk}(1 - \beta_{jk}) + \omega_{jk}^2\right]$,
where $\omega_{k}^2$  and $\omega_{jk}^2$ are given in (\ref{omega_k}) and
(\ref{omega_jk}), respectively.
%The asymptotic variances of $\hat{\beta}_{jk,\mathrm{KNN}}$, 
To obtain $\sigma_{kjk}$, we observe that
\begin{eqnarray}
\beta_{jk} &=& \Pro\left(T \ge c_j, D_k = 1\right) = \Pro\left(D_k = 1\right)\Pro\left(T \ge c_j| D_k = 1\right) \nonumber\\
&=& \Pro\left(D_k = 1\right)\left[1 - \Pro\left(T < c_j| D_k = 1\right)\right] \nonumber\\
&=& \Pro\left(D_k = 1\right) - \Pro\left(D_k = 1\right)\Pro\left(T < c_j| D_k = 1\right) \nonumber \\
&=& \Pro\left(D_k = 1\right) - \Pro\left(T < c_j, D_k = 1\right) \nonumber\\
&=& \theta_k - \gamma_{jk}\nonumber,
\end{eqnarray}
for $k = 1,2,3$ and $j = 1,2$. Then, we define
\[
\hat{\gamma}_{jk,\mathrm{KNN}} = \frac{1}{n}\sum_{i=1}^{n}\mathrm{I}(T_i < c_j)\left[V_iD'_{ki} + (1-V_i)\hat{\rho}_{ki,K}\right].
\]
The asymptotic variance of $\sqrt{n}\hat{\gamma}_{jk,\mathrm{KNN}}$, $\zeta_{jk}^2$, is obtained as that of $\sqrt{n}\hat{\beta}_{jk,\mathrm{KNN}}$. In fact, we get $\zeta_{jk}^2 = \left[\gamma_{jk}(1 - \gamma_{jk}) + \eta_{jk}^2\right]$, where
\begin{eqnarray}
\eta_{jk}^2 &=& \frac{K + 1}{K}\E \left[\mathrm{I}(T < c_j)\rho_k(T,A)\{1-\rho_k(T,A)\} \{1-\pi(T,A)\}\right] \nonumber\\
&& + \: \E \left[\frac{\mathrm{I}(T < c_j)\rho_k(T,A)\{1-\rho_k(T,A)\} \{1-\pi(T,A)\}^2}{\pi(T,A)}\right]. \nonumber \label{eta_jk}
\end{eqnarray}
It is straightforward to see that $\hat{\gamma}_{jk,\mathrm{KNN}} = \hat{\theta}_{k,\mathrm{KNN}} - \hat{\beta}_{jk,\mathrm{KNN}}$. Thus, we can compute the asymptotic covariances $\sigma_{kjk}$ for $j = 1,2$ and $k = 1,2,3$, using the fact that
\begin{eqnarray}
\aV(\sqrt{n}\hat{\gamma}_{jk,\mathrm{KNN}}) &=& \aV(\sqrt{n}\hat{\theta}_{k,\mathrm{KNN}} - \sqrt{n}\hat{\beta}_{jk,\mathrm{KNN}}) \nonumber \\
&=& \aV(\sqrt{n}\hat{\theta}_{k,\mathrm{KNN}}) + \aV(\sqrt{n}\hat{\beta}_{jk,\mathrm{KNN}}) - 2\aCov (\sqrt{n}\hat{\theta}_{k,\mathrm{KNN}}, \sqrt{n}\hat{\beta}_{jk,\mathrm{KNN}}).\nonumber
\end{eqnarray}
This leads to
\[
\sigma_{kjk} = \frac{1}{2}\left(\sigma^2_k + \sigma^2_{jk} - \zeta^2_{jk}\right) \label{sigma_kjk}.
\]
Hence,
\[
\begin{array}{r r r r}
\sigma_{111} &= \dfrac{1}{2}\left(\sigma_1^2 + \sigma_{11}^2 - \zeta_{11}^2\right) ;& \qquad \sigma_{212} &= \dfrac{1}{2}\left(\sigma_2^2 + \sigma_{12}^2 - \zeta_{12}^2\right); \\ [8pt]
\sigma_{222} &= \dfrac{1}{2}\left(\sigma_2^2 + \sigma_{22}^2 - \zeta_{22}^2\right) ;& \qquad \sigma_{323} &= \dfrac{1}{2}\left(\sigma_3^2 + \sigma_{23}^2 - \zeta_{23}^2\right).
\end{array}
\]
As for $\lambda^2$, one can show that 
\begin{equation}
\lambda^2 = \frac{1}{n}\left\{(\beta_{12} - \beta_{22})\left[1 - (\beta_{12} - \beta_{22})\right] + \omega^2_{12} - \omega^2_{22}\right\}. \nonumber 
\end{equation}
(see Appendix 1 and, in particular, equation (\ref{eq:rem:2})). 
Therefore, suitable explicit expressions for the asymptotic variances of KNN estimators can be found. Such expressions will depend on quantities as $\theta_k$, $\beta_{jk}$ $\omega^2_k$, $\omega^2_{jk}$, $\gamma_{jk}$ and $\eta^2_{jk}$ only. As a consequence, to obtain consistent estimates of the asymptotic variances, ultimately we need to estimate the quantities $\omega_k^2, \omega_{jk}^2$ and $\eta_{jk}^2$. 

In Appendix 2 we show that suitable expressions can be obtained also for the elements
$\xi_{12}$, $\xi_{13}$ and $\xi_{23}$ of the covariance matrix $\Xi$. Such expressions will depend, among others, on certain quantities $\psi^2_{1212}$, $\psi^2_{112}$, $\psi^2_{213}$, $\psi^2_{12}$, $\psi^2_{113}$, $\psi^2_{223}$ and $\psi^2_{1223}$ similar to $\omega^2_k$, $\omega^2_{jk}$ or $\eta_{jk}^2$.

\subsection{Choice of $K$ and the distance measure}
\label{sec:choice}
The proposed method is based on nearest-neighbor imputation,
which requires the choice of a value for $K$ as well as
a distance measure.

In practice,
the selection of a suitable distance is  tipically dictated
by features of the data and possible subjective evaluations; thus, a general indication about
an adequate choice is difficult to express. 
In many cases, the simple Euclidean distance may be appropriate.
Other times, the researcher may wish to consider specific characteristics of data at hand, and then make a different choice. For example, the diagnostic test result $T$ and the auxiliary covariate $A$ could be heterogeneous with respect to their variances (which is particularly true when the variables are measured on heterogeneous scales). In this case, the choice of the Mahalanobis distance may be suitable.

%The statistical distance and the size of neighbor play a key role in the NN imputation. They require a choice or a selection procedure to apply the approach in practice. The distance measure can be assigned as the Euclidean distance, the Mahalanobis distance, whichever is appropriate to the data joint distribution. However, in practical situations, for giving a good choice, we have to rest on all characteristic of the data, possible subjective evaluations and computational concerns. For example, the diagnostic test result $T$ and the auxiliary covariate $A$ can be heterogeneous with respect to their variances, which is particularly true when the variables are measured on heterogeneous scales (such as years, kilograms, dollars, etc.). In this case, a selection of the Mahalanobis distance may be suitable, but we can also employ the Euclidean distance after using a standardization of the variables. Thus, it is difficult to demonstrate a reasonable choice in general indication.

As for the choice of the size of the neighborhood, \citet{ning:12} argue that
 nearest-neighbor imputation whit a small value of $K$ tipically yields negligible bias of the estimators, but a large variance; the opposite happen with a large value of $K$. The authors suggest that the choice of $K \in \{1,2\}$ is generally adequate when the aim is to estimate an average. 
%when the propensity scores vary widely across the mixture distribution of the auxiliary covariate.
 A similar comment is also raised by \citet{adi:15} and \citet{adi:15:auc}, i.e., a small value of $K$, within the range 1--3, may be a good choice to estimate ROC curves and AUC. However, the authors stress that, in general, the choice of $K$ may depend on the dimension of the feature space, and propose to use cross--validation to find $K$ in case of high--dimensional covariate.
% The selected value must be less than the number of verified units present in the sample.
 Specifically, the authors indicate that a suitable value of the size of neighbor could be found by
\[
K^* = \argmin_{K = 1, \ldots, n_{ver}} \frac{1}{n_{ver}}\left\|D - \hat{\rho}_{K} \right\|_1,
\]
where $\|\cdot\|_1$ denotes $L_1$ norm for vector and $n_{ver}$ is the number of verified subjects. The formula above can be generalized to our multi--class case. In fact, when the disease status $D$ has $q$ categories ($q \ge 3$), the difference between $D$ and $\hat{\rho}_{K}$ is a $n_{ver} \times (q-1)$ matrix. In such situation, the selection rule could be
\begin{equation}
K^* = \argmin_{K = 1, \ldots, n_{ver}} \frac{1}{n_{ver}(q-1)}\left\|D - \hat{\rho}_{K} \right\|_{1,1},
\label{choice:K:1}
\end{equation}
where $\|{\cal A}\|_{1,1}$ denotes $L_{1,1}$ norm of matrix ${\cal A}$, i.e.,
\[
\|{\cal A}\|_{1,1} =  \sum_{j=1}^{q-1}\left(\sum_{i=1}^{n_{ver}}|a_{ij}|\right).
\]
%In (\ref{choice:K:1}), if $q = 2$, we have the same rule as in \citet{adi:15}.

%In this study, we find the bias and variances of KNN estimators with some specified values of $K$ and some distance measures. In particular, we consider five different neighborhood sizes $K = 1,3,5,10,20$ and four distance measures: Euclidean, Manhattan, Canberra and Mahalanobis. To estimate asymptotic variances, we choose $\bar{K} = 2$. These results are compared with Monte Carlo variances for evaluating the ability of the formula (\ref{aV:tcf1}), (\ref{aV:tcf2_2}) and (\ref{aV:tcf3_2}). The results are presented in Section S1 of Supplementary Material.

%\bigskip

\section{Variance-covariance estimation}

Consider first the problem of estimating of the variances of $\widehat{\TCF}_{1,\mathrm{KNN}}$, $\widehat{\TCF}_{2,\mathrm{KNN}}$ and $\widehat{\TCF}_{3,\mathrm{KNN}}$. In a nonparametric framework, quantities as ${\omega}_k^2, {\omega}_{jk}^2$ and ${\eta}_{jk}^2$  can be estimated by their empirical counterparts, using also the plug--in method. Here, we consider an approach that uses a nearest-neighbor rule to estimate both the functions $\rho_k(T,A)$ and the propensity score $\pi(T,A)$, that are present in the expressions of ${\omega}_k^2, {\omega}_{jk}^2$ and ${\eta}_{jk}^2$. In particular, for the conditional probabilities of disease, we can use KNN estimates $\tilde{\rho}_{ki}=\hat{\rho}_{ki,\bar{K}}$, where the integer $\bar{K}$ must be greater than one to avoid estimates equal to zero. For the conditional probabilities of verification, we can resort to the KNN procedure proposed in \cite{adi:15}, which considers the estimates
\[
\tilde{\pi}_{i} = \frac{1}{K^*_i}\sum_{l = 1}^{K^*_i} V_{i(l)},
\]
where $\left\{(T_{i(l)},A_{i(l)},V_{i(l)}): l = 1,\ldots,K_i^*\right\}$ is a set of $K_i^*$ observed pairs and $(T_{i(l)},A_{i(l)})$ denotes the $j$-th nearest neighbor to $(T_i,A_i)$ among all $(T,A)$'s. When $V_i$ equals 0, $K_i^*$ is set equal to the rank of the first verified nearest neighbor to the unit $i$, i.e., $K_i^*$ is such that $V_{i(K_i^*)} = 1$ and $V_i = V_{i(1)} = V_{i(2)} = \ldots = V_{i(K_i^* - 1)} = 0$. In case of $V_i = 1$, $K_i^*$ is such that $V_i = V_{i(1)} = V_{i(2)} = \ldots = V_{i(K_i^* - 1)} = 1$, and $V_{i(K_i^*)} = 0$, i.e., $K_i^*$ is set equal to the rank of the first non--verified nearest neighbor to the unit $i$. 
Such a procedure automatically avoids zero values for the
$\tilde{\pi}_{i}$'s.
%
%
%The value of $K_i^*$ may equal to 1 in two cases: (i) $V_i = 1$ and $V_{i(1)} = 0$; (ii) $V_i = 0$ and $V_{i(1)} = 1$. In the first situation, we see zero values for $\hat{\pi}_{i,K_i^*}$. In order to deal this problem, we can apply the process of choosing $K_i^*$ corresponding to the case of $V_i = 0$, i.e., $K_i^*$ is such that $V_{i(2)} = \ldots = V_{i(K_i^* - 1)} = 0$ and $V_{i(K_i^*)} = 1$. The values of $\hat{\pi}_{i,K_i^*}$ equal to $1$ in the second situation, however, this is unreasonable in cause of $V_i = 0$. Therefore, in this case, $K_i^*$ is chosen such that $V_{i(2)} = \ldots = V_{i(K_i^* - 1)} = 1$ and $V_{i(K_i^*)} = 0$. 

Then, based on the $\tilde{\rho}_{ki}$'s and $\tilde{\pi}_{i}$'s, we obtain the estimates
\begin{eqnarray}
\hat{\omega}_{k}^2 &=& \frac{K + 1}{nK}\sum_{i=1}^{n}\tilde{\rho}_{ki}\left(1 - \tilde{\rho}_{ki}\right)\left(1 - \tilde{\pi}_{i}\right) + \frac{1}{n}\sum_{i = 1}^{n}\frac{\tilde{\rho}_{ki}\left(1 - \tilde{\rho}_{ki}\right)\left(1 - \tilde{\pi}_{i}\right)^2}{\tilde{\pi}_{i}}, \nonumber \\
\hat{\omega}_{jk}^2 &=& \frac{K + 1}{nK}\sum_{i=1}^{n}\mathrm{I}(T_i \ge c_j)\tilde{\rho}_{ki}\left(1 - \tilde{\rho}_{ki}\right)\left(1 - \tilde{\pi}_{i}\right) \nonumber \\ 
&& + \: \frac{1}{n}\sum_{i = 1}^{n}\frac{\mathrm{I}(T_i \ge c_j)\tilde{\rho}_{ki}\left(1 - \tilde{\rho}_{ki}\right)\left(1 - \tilde{\pi}_{i}\right)^2}{\tilde{\pi}_{i}}, \nonumber \\
\hat{\eta}_{jk}^2 &=& \frac{K + 1}{nK}\sum_{i=1}^{n}\mathrm{I}(T_i < c_j)\tilde{\rho}_{ki}\left(1 - \tilde{\rho}_{ki}\right)\left(1 - \tilde{\pi}_{i}\right) \nonumber \\ 
&& + \: \frac{1}{n}\sum_{i = 1}^{n}\frac{\mathrm{I}(T_i < c_j)\tilde{\rho}_{ki}\left(1 - \tilde{\rho}_{ki}\right)\left(1 - \tilde{\pi}_{i}\right)^2}{\tilde{\pi}_{i}} , \nonumber
\end{eqnarray}
from which, along with $\hat{\theta}_{k,\mathrm{KNN}}$, $\hat{\beta}_{jk,\mathrm{KNN}}$ and $\hat{\gamma}_{jk,\mathrm{KNN}}$, one derives the estimates of the variances of
the proposed KNN imputation estimators.

To obtain estimates of covariances, 
we need to estimate also the quantities $\psi^2_{1212}$, $\psi^2_{112}$, $\psi^2_{213}$, $\psi^2_{12}$, $\psi^2_{113}$, $\psi^2_{223}$ and $\psi^2_{1223}$  given in Appendix 2.
However, estimates of such quantities are similar to those given above for
${\omega}_k^2, {\omega}_{jk}^2$ and ${\eta}_{jk}^2$. For example,
\begin{eqnarray}
\hat{\psi}_{1212}^2 &=& \frac{K + 1}{nK}\sum_{i=1}^{n}\mathrm{I}(c_1 \le T_i < c_2)\tilde{\rho}_{1i}\tilde{\rho}_{2i}\left(1 - \tilde{\pi}_{i}\right) \nonumber \\ 
&& + \: \frac{1}{n}\sum_{i = 1}^{n}\frac{\mathrm{I}(c_1 \le T_i < c_2)\tilde{\rho}_{1i}
\tilde{\rho}_{2i}\left(1 - \tilde{\pi}_{i}\right)^2}{\tilde{\pi}_{i}}. \nonumber
\end{eqnarray}

%The first, we have
%\begin{eqnarray}
%\V (S_{12} - S_{22}) &=& \frac{1}{n^2}\sum_{i=1}^{n}\V \left[V_i \mathrm{I}(c_1 \le T_i < c_2)\left(D_{2i} - \rho_{2i}\right)\right] \nonumber \\
%&=& \frac{1}{n^2}\sum_{i=1}^{n}\E \left\{\V \left[V_i \mathrm{I}(c_1 \le T_i < c_2)\left(D_{2i} - \rho_{2i}\right)| T_i, A_i \right] \right\} \nonumber \\
%&=& \frac{1}{n^2}\sum_{i=1}^{n} \E \Bigg\{ \E \left[V_i \mathrm{I}(c_1 \le T_i < c_2)\left(D_{2i} - \rho_{2i}\right)| T_i, A_i \right]^2 \nonumber \\
%&& - \: \E^2 \left[V_i \mathrm{I}(c_1 \le T_i < c_2)\left(D_{2i} - \rho_{2i}\right)| T_i, A_i \right] 
%\Bigg\} \nonumber \\
%&=& \frac{1}{n^2}\sum_{i=1}^{n} \E \left\{ \E(V_i|T_i,A_i)^2 \E \left[\mathrm{I}(c_1 \le T_i < c_2)\left(D_{2i} - \rho_{2i}\right)| T_i, A_i \right]^2 \right\} \nonumber \\
%&=& \frac{1}{n^2}\sum_{i=1}^{n} \E \left\{ \E(V_i|T_i,A_i) \V \left[\mathrm{I}(c_1 \le T_i < c_2)\left(D_{2i} - \rho_{2i}\right)| T_i, A_i \right] \right\} \nonumber \\
%&=& \frac{1}{n^2}\sum_{i=1}^{n} \E \left\{ \pi(T,A) \V \left[\mathrm{I}(c_1 \le T_i < c_2, D_{2i} = 1)| T_i, A_i \right] \right\} \nonumber \\
%&=& \frac{1}{n}\E \left\{\pi(T,A)\delta^2(T,A) \right\}, \nonumber
%\end{eqnarray}
%where, $\delta^2(T,A)$ is the conditional variance of $\mathrm{I}(c_1 \le T_i < c_2, D_{2i} = 1)$ given $T,A$. In analogy, we get
%\[
%\V (R_{12} - R_{22}) = \frac{1}{n}\V \left[\mathrm{I}(c_1 \le T_i < c_2)\rho_{2}(T,A)\right].
%\]

Of course, there are other possible approaches to obtain variance and covariance estimates. 
For instance, one could resort to a standard bootstrap procedure.
%\begin{remark}[Bootstrap procedure to estimate asymptotic variance for $\widehat{\TCF}_k$]
From the original observations $(T_i,A_i,D_i,V_i)$, $i = 1,\ldots,n$, consider $B$ bootstrap samples $(T_i^{*b},A_i^{*b},D_i^{*b},V_i^{*b})$, $b = 1,\ldots,B$, and $i = 1,\ldots,n$. For the $b$-th sample, compute the bootstrap estimates $\widehat{\TCF}_{1,\mathrm{KNN}}^{*b}(c_1)$, $\widehat{\TCF}_{2,\mathrm{KNN}}^{*b}(c_1,c_2)$ and $\widehat{\TCF}_{3,\mathrm{KNN}}^{*b}(c_2)$ as 
%by the following formula
\begin{align}
\widehat{\TCF}_{1,\mathrm{KNN}}^{*b}(c_1) &=  \frac{\sum\limits_{i=1}^{n}\mathrm{I}(T_i^{*b} < c_1)\left[V_i^{*b} D^{*b}_{1i} + (1 - V_i^{*b}) \hat{\rho}_{1i,K}^{*b}\right]}{\sum\limits_{i=1}^{n}\left[V_i^{*b} D^{*b}_{1i} + (1 - V_i^{*b}) \hat{\rho}_{1i,K}^{*b}\right]} \nonumber, \\
\widehat{\TCF}_{2,\mathrm{KNN}}^{*b}(c_1,c_2) &= \frac{\sum\limits_{i=1}^{n}\mathrm{I}(c_1 \le T_i^{*b} < c_2)\left[V_i^{*b} D^{*b}_{2i} + (1 - V_i^{*b}) \hat{\rho}_{2i,K}^{*b}\right]}{\sum\limits_{i=1}^{n}\left[V_i^{*b} D^{*b}_{2i} + (1 - V_i^{*b}) \hat{\rho}_{2i,K}^{*b}\right]}, \nonumber \label{boot:est:knn} \displaybreak[3] \\
\widehat{\TCF}_{3,\mathrm{KNN}}^{*b}(c_2) &=  \frac{\sum\limits_{i=1}^{n}\mathrm{I}(T_i^{*b} \ge c_2)\left[V_i^{*b} D^{*b}_{3i} + (1 - V_i^{*b}) \hat{\rho}_{3i,K}^{*b}\right]}{\sum\limits_{i=1}^{n}\left[V_i^{*b} D^{*b}_{3i} + (1 - V_i^{*b}) \hat{\rho}_{3i,K}^{*b}\right]}, \nonumber
\end{align}
where $\hat{\rho}_{ki,K}^{*b}$, $k = 1,2,3$, denote the KNN imputation values for missing labels $D^{*b}_{ki}$ in the bootstrap sample. Then, the bootstrap estimator of the variance of $\widehat{\TCF}_{k,\mathrm{KNN}}(c_1,c_2)$ is
\begin{equation}
\widehat{\V}(\widehat{\TCF}_{k,\mathrm{KNN}}(c_1,c_2)) = \frac{1}{B-1}\sum_{b=1}^{B}\left(\widehat{\TCF}_{k,\mathrm{KNN}}^{*b}(c_1,c_2) - \widehat{\TCF}_{k,\mathrm{KNN}}^{*}(c_1,c_2)\right)^2 \nonumber \label{var:boot:est:knn},
\end{equation}
where $\widehat{\TCF}_{k,\mathrm{KNN}}^{*}(c_1,c_2)$ is the mean of the $B$ bootstrap estimates $\widehat{\TCF}_{k,\mathrm{KNN}}^{*b}(c_1,c_2)$. 
%Note that the variances are depend on the cut points $c_1$ and $c_2$.
%\end{remark}
More generally, the bootstrap estimate of the covariance matrix $\Xi$ is 
\begin{equation}
\widehat{\Xi}_B = \frac{1}{B-1}\left(\widehat{\TCF}_{\mathrm{KNN}}^{*B}(c_1,c_2) - \widehat{\TCF}_{\mathrm{KNN}}^{*}(c_1,c_2)\right)\left(\widehat{\TCF}_{\mathrm{KNN}}^{*B}(c_1,c_2) - \widehat{\TCF}_{\mathrm{KNN}}^{*}(c_1,c_2)\right)^\top,
\nonumber \label{covar:boot:est:knn}
\end{equation}
where $\widehat{\TCF}_{\mathrm{KNN}}^{*B}(c_1,c_2)$ is a $B \times 3$ matrix, whose element in the $b$--th row and the $k$--th column corresponds to $\widehat{\TCF}_{k,\mathrm{KNN}}^{*b}(c_1,c_2)$, and $\widehat{\TCF}_{\mathrm{KNN}}^{*}(c_1,c_2)$ is a column vector that consist of the means of the $B$ bootstrap estimates $\widehat{\TCF}_{k,\mathrm{KNN}}^{*b}(c_1,c_2)$, $k=1,2,3$. 
%Note that the covariance matrix are depend on the cut points $c_1$ and $c_2$.

\section{Simulation studies}
\label{s:simulation}
In this section, the ability of KNN method to estimate TCF$_1$, TCF$_2$ and TCF$_3$ is evaluated by using Monte Carlo experiments. 
%Also, the square root of the estimates of the asymptotic variance is compared with Monte Carlo and bootstrap standard deviation. In addition, 
We also compare the proposed method with partially parametric approaches, i.e., FI, MSI, IPW and SPE approaches. As already mentioned,
partially parametric bias-corrected estimators of TCF$_1$, TCF$_2$ and TCF$_3$ require  parametric regression models to estimate $\rho_{ki} = \Pro(D_{ki} = 1|T_i, A_i)$, or $\pi_i = \Pro(V_i = 1 | T_i,A_i)$, or both. A wrong specification of such models may affect the estimators. Therefore, in the simulation study we consider two scenarios: in the parametric estimation process,
\begin{enumerate}
\item[(i)] the disease model and the verification model are both correctly specified;
\item[(ii)] the disease model and the verification model are both misspecified.
\end{enumerate}
In both scenarios,
%allow to evaluate the behavior of the proposed estimators in finite samples. In practice, 
we execute $5000$ Monte Carlo runs at each setting; we set three sample sizes, i.e., $250$, $500$ and $1000$ in scenario (i) and a sample size of $1000$ in scenario (ii).

We consider KNN  estimators  based on the Euclidean distance, with  $K = 1$ and $K=3$.
This in light of the discussion in Section 3.4 and some results of a preliminary simulation study presented in Section S1, Supplementary Material. 
In such study, we compared the behavior of the KNN estimators for several choices of the distance measure (Euclidean, Manhattan, Canberra and Mahalanobis) and  the size of the neighborhood ($K = 1,3,5,10,20$).

%In the begin of simulation studies, to explore what are good values of $K$ and suitable distance, we run preliminary simulation studies. The results are presented in Section S1 of Supplementary Material. In particular, we considered four distance measures, such as Euclidean, Manhattan, Canberra and Mahalanobis; and different neighbor sizes, $K = 1,3,5,10,20$.
\subsection{Correctly specified parametric models}
%\subsubsection{Set--up data}
The true disease $D$ is generated by a trinomial random vector $(D_{1},D_{2},D_{3})$, such that $D_{k}$ is a Bernoulli random variable with success probability $\theta_k$, $k=1,2,3$. We set $\theta_1 = 0.4, \theta_2 = 0.35$ and $\theta_3 = 0.25$. The continuous test result $T$ and a covariate $A$ are generated from the following conditional models
\[
T,A |D_{k} \sim \mathcal{N}_2 \left(\mu_k, \Sigma\right), \qquad k = 1,2,3,
\]
where $\mu_k = (2 k, k)^\top$ and 
\[
\Sigma = \left(\begin{array}{c c}
\sigma^2_{T|D} & \sigma_{T,A|D} \\
\sigma_{T,A|D} & \sigma^2_{A|D}
\end{array}\right).
\]
We consider three different values for $\Sigma$, specifically 
\[
\left(\begin{array}{c c}
1.75 & 0.1 \\
0.1 & 2.5
\end{array}\right) , \qquad 
\left(\begin{array}{c c}
2.5 & 1.5 \\
1.5 & 2.5
\end{array}\right) , \qquad 
\left(\begin{array}{c c}
5.5 & 3 \\
3 & 2.5
\end{array}\right),
\]
giving rise to a correlation between $T$ and $A$ equal to $0.36, 0.69$ and $0.84$, respectively. Values chosen for $\Sigma$ give rise to true VUS values ranging from 0.7175 to 0.4778.  The verification status $V$ is generated by the following model
\[
\mathrm{logit}\left\{\Pro(V = 1|T,A)\right\} = \delta_0 + \delta_1 T + \delta_2 A,
\]
where we fix $\delta_0 = 0.5, \delta_1 = -0.3$ and $\delta_2 = 0.75$. This choice corresponds to a verification rate of about $0.65$. We consider six pairs of cut points $(c_1,c_2)$, i.e., $(2,4), (2,5)$, $(2,7), (4,5), (4,7)$ and $(5,7)$.
Since the conditional distribution of $T$ given $D_{k}$ is the normal distribution, the true parameters values are 
\begin{eqnarray}
{\TCF}_{1}(c_1) &=& \Phi \left(\frac{c_1 - 2}{\sigma_{T|D}}\right) \nonumber, \\
{\TCF}_{2}(c_1,c_2) &=& \Phi \left(\frac{c_2 - 4}{\sigma_{T|D}}\right) - \Phi \left(\frac{c_1 - 4}{\sigma_{T|D}}\right), \nonumber\\
{\TCF}_{3}(c_2) &=& 1 - \Phi \left(\frac{c_2 - 6}{\sigma_{T|D}}\right), \nonumber
\end{eqnarray}
where $\Phi(\cdot)$ denotes the cumulative distribution function of the standard normal random variable.
%\subsubsection{Choice of distance and $K$}
%In this study, we find the bias and variances of KNN estimators with some specified values of $K$ and some distance measures. In particular, we consider five different neighborhood sizes $K = 1,3,5,10,20$ and four distance measures: Euclidean, Manhattan, Canberra and Mahalanobis. To estimate asymptotic variances, we choose $\bar{K} = 2$. These results are compared with Monte Carlo variances for evaluating the ability of the formula (\ref{aV:tcf1}), (\ref{aV:tcf2_2}) and (\ref{aV:tcf3_2}). The results are presented in Section S1 of Supplementary Material.

%\subsubsection{Comparison with parameter estimation}
In this set--up,  FI, MSI, IPW and SPE estimators are computed under correct working models for both the disease and the verification processes. Therefore, the conditional verification probabilities $\pi_i$ are estimated from a logistic model for $V$ given $T$ and $A$ with logit link. Under our data--generating process, the true conditional disease model is a multinomial logistic model
\[
\mathrm{Pr}(D_{k} = 1|T,A) = \frac{\exp \left(\tau_{0k} + \tau_{1k} T + \tau_{2k} A\right)}{1 + \exp \left(\tau_{01} + \tau_{11} T + \tau_{21} A\right) + \exp \left(\tau_{02} + \tau_{12} T + \tau_{22} A\right)},
\]
for suitable $\tau_{0k},\tau_{1k},\tau_{2k}$, where $k = 1,2$.

Tables \ref{tab:res11}--\ref{tab:res13} show Monte Carlo means and standard deviations of the estimators for the three true class factions. Results concern the estimators FI, MSI, IPW, SPE, and the KNN estimator with $K = 1$ and $K = 3$ computed using the Euclidean distance. 
%and sample size equal to $250$. 
Also, the estimated standard deviations are shown in the tables. The estimates are obtained by using asymptotic results. To estimate standard deviations of KNN estimators, we use the KNN procedure discussed in Section 4, with $\bar{K} = 2$. Each table refers to a choosen value for $\Sigma$. The sample size is  $250$. The results for sample sizes $500$ and $1000$ are presented in Section S2 of Supplementary Material.

As expected, the parametric approaches work well when both models for $\rho_k(t,a)$ and $\pi(t,a)$ are correctly specified. FI and MSI estimators seem to be the most efficient ones, whereas the IPW approach seems to provide less powerful estimators, in general.
The new proposals (1NN and 3NN estimators) yield also good results, comparable, in terms of bias and standard deviation, to those of the parametric competitors. 
%For instant, it is not too hard to realize that the variances of SPE are quite similar to 1NN or 3NN. 
Moreover,  estimators 1NN and 3NN seem to achieve similar performances, 
%making the choice of the number $K$ of nearest neighbors not particularly crucial (within the range 1--3), . 
and the results about estimated standard deviations of KNN estimators seem to show the effectiveness of the procedure discussed in Section 4.

%of asymptotic variances showed that the performance of variance estimation procedure sounds good. 

Finally, some results of simulation experiments performed to explore the effect of a multidimensional vector of auxiliary covariates are given in Section S3, Supplementary Material. A vector $A$ of dimension 3 is employed. The results in Table 16, Supplementary Material,  show that KNN estimators still behave satisfactorily.

\begin{table}[htbp]
\caption{Monte Carlo means, Monte Carlo standard deviations and estimated standard deviations of the estimators for true class fractions, in case of sample size equals to $250$. The first value of $\Sigma$ is considered. ``True'' denotes the true parameter value.}
\begin{center}
\begin{footnotesize}
\begin{tabular}{c c c c c c c c c c}
\toprule
	& TCF$_1$ & TCF$_2$ & TCF$_3$ & MC.sd$_1$ & MC.sd$_2$ & MC.sd$_3$ & asy.sd$_1$ & asy.sd$_2$ & asy.sd$_3$ \\
\midrule
	& \multicolumn{9}{c}{cut points $ = (2,4)$} \\
\hdashline
	True & 0.5000 & 0.4347 & 0.9347 &  &  &  &  &  &  \\
	FI & 0.5005 & 0.4348 & 0.9344 & 0.0537 & 0.0484 & 0.0269 & 0.0440 & 0.0398 & 0.0500 \\ 
	MSI & 0.5005 & 0.4346 & 0.9342 & 0.0550 & 0.0547 & 0.0320 & 0.0465 & 0.0475 & 0.0536 \\ 
	IPW & 0.4998 & 0.4349 & 0.9341 & 0.0722 & 0.0727 & 0.0372 & 0.0688 & 0.0702 & 0.0420 \\ 
	SPE & 0.5010 & 0.4346 & 0.9344 & 0.0628 & 0.0659 & 0.0364 & 0.0857 & 0.0637 & 0.0363 \\ 
	1NN & 0.4989 & 0.4334 & 0.9331 & 0.0592 & 0.0665 & 0.0387 & 0.0555 & 0.0626 & 0.0382 \\ 
	3NN & 0.4975 & 0.4325 & 0.9322 & 0.0567 & 0.0617 & 0.0364 & 0.0545 & 0.0608 & 0.0372 \\
\midrule
	& \multicolumn{9}{c}{cut points $ = (2,5)$} \\
\hdashline
	True & 0.5000 & 0.7099 & 0.7752 &  &  &  &  &  &  \\ 
  	FI & 0.5005 & 0.7111 & 0.7761 & 0.0537 & 0.0461 & 0.0534 & 0.0440 & 0.0400 & 0.0583 \\ 
	MSI & 0.5005 & 0.7104 & 0.7756 & 0.0550 & 0.0511 & 0.0566 & 0.0465 & 0.0467 & 0.0626 \\ 
  	IPW & 0.4998 & 0.7108 & 0.7750 & 0.0722 & 0.0701 & 0.0663 & 0.0688 & 0.0667 & 0.0713 \\ 
  	SPE & 0.5010 & 0.7106 & 0.7762 & 0.0628 & 0.0619 & 0.0627 & 0.0857 & 0.0604 & 0.0611 \\
	1NN &  0.4989 & 0.7068 & 0.7738 & 0.0592 & 0.0627 & 0.0652 & 0.0555 & 0.0591 & 0.0625 \\ 
	3NN & 0.4975 & 0.7038 & 0.7714 & 0.0567 & 0.0576 & 0.0615 & 0.0545 & 0.0574 & 0.0610 \\
\midrule
	& \multicolumn{9}{c}{cut points $ = (2,7)$} \\
\hdashline
	True & 0.5000 & 0.9230 & 0.2248 &  &  &  &  &  &  \\ 
 	FI & 0.5005 & 0.9229 & 0.2240 & 0.0537 & 0.0236 & 0.0522 & 0.0440 & 0.0309 & 0.0428 \\ 
 	MSI & 0.5005 & 0.9231 & 0.2243 & 0.0550 & 0.0285 & 0.0531 & 0.0465 & 0.0353 & 0.0443 \\ 
  	IPW & 0.4998 & 0.9238 & 0.2222 & 0.0722 & 0.0374 & 0.0765 & 0.0688 & 0.0360 & 0.0728 \\ 
  	SPE & 0.5010 & 0.9236 & 0.2250 & 0.0628 & 0.0362 & 0.0578 & 0.0857 & 0.0348 & 0.0573 \\ 
	1NN & 0.4989 & 0.9201 & 0.2233 & 0.0592 & 0.0372 & 0.0577 & 0.0555 & 0.0366 & 0.0570 \\ 
  	3NN & 0.4975 & 0.9177 & 0.2216 & 0.0567 & 0.0340 & 0.0558 & 0.0545 & 0.0355 & 0.0563 \\
\midrule
	& \multicolumn{9}{c}{cut points $ = (4,5)$} \\
\hdashline
	True & 0.9347 & 0.2752 & 0.7752 &  &  &  &  &  &  \\
  	FI & 0.9347 & 0.2763 & 0.7761 & 0.0245 & 0.0412 & 0.0534 & 0.0179 & 0.0336 & 0.0583 \\ 
  	MSI & 0.9348 & 0.2758 & 0.7756 & 0.0271 & 0.0471 & 0.0566 & 0.0220 & 0.0404 & 0.0626 \\ 
  	IPW & 0.9350 & 0.2758 & 0.7750 & 0.0421 & 0.0693 & 0.0663 & 0.0391 & 0.0651 & 0.0713 \\ 
  	SPE & 0.9353 & 0.2761 & 0.7762 & 0.0386 & 0.0590 & 0.0627 & 0.0377 & 0.0568 & 0.0611 \\
  	1NN & 0.9322 & 0.2734 & 0.7738 & 0.0374 & 0.0572 & 0.0652 & 0.0342 & 0.0553 & 0.0625 \\ 
  	3NN & 0.9303 & 0.2712 & 0.7714 & 0.0328 & 0.0526 & 0.0615 & 0.0332 & 0.0538 & 0.0610 \\
\midrule
	& \multicolumn{9}{c}{cut points $ = (4,7)$} \\
\hdashline
	True & 0.9347 & 0.4883 & 0.2248 &  &  &  &  &  &  \\
  	FI & 0.9347 & 0.4881 & 0.2240 & 0.0245 & 0.0541 & 0.0522 & 0.0179 & 0.0444 & 0.0428 \\ 
  	MSI & 0.9348 & 0.4885 & 0.2243 & 0.0271 & 0.0576 & 0.0531 & 0.0220 & 0.0495 & 0.0443 \\ 
  	IPW & 0.9350 & 0.4889 & 0.2222 & 0.0421 & 0.0741 & 0.0765 & 0.0391 & 0.0713 & 0.0728 \\ 
  	SPE & 0.9353 & 0.4890 & 0.2250 & 0.0386 & 0.0674 & 0.0578 & 0.0377 & 0.0646 & 0.0573 \\  
  	1NN & 0.9322 & 0.4867 & 0.2233 & 0.0374 & 0.0680 & 0.0577 & 0.0342 & 0.0633 & 0.0570 \\ 
  	3NN & 0.9303 & 0.4852 & 0.2216 & 0.0328 & 0.0630 & 0.0558 & 0.0332 & 0.0615 & 0.0563 \\
\midrule  	
	& \multicolumn{9}{c}{cut points $ = (5,7)$} \\
\hdashline
	True & 0.9883 & 0.2132 & 0.2248 &  &  &  &  &  &  \\
  	FI & 0.9879 & 0.2118 & 0.2240 & 0.0075 & 0.0435 & 0.0522 & 0.0055 & 0.0336 & 0.0428 \\ 
  	MSI & 0.9882 & 0.2127 & 0.2243 & 0.0096 & 0.0467 & 0.0531 & 0.0084 & 0.0388 & 0.0443 \\ 
  	IPW & 0.9887 & 0.2130 & 0.2222 & 0.0193 & 0.0653 & 0.0765 & 0.0177 & 0.0618 & 0.0728 \\ 
  	SPE & 0.9888 & 0.2130 & 0.2250 & 0.0191 & 0.0571 & 0.0578 & 0.0184 & 0.0554 & 0.0573 \\ 
  	1NN & 0.9868 & 0.2133 & 0.2233 & 0.0177 & 0.0567 & 0.0577 & 0.0172 & 0.0532 & 0.0570 \\ 
  	3NN & 0.9860 & 0.2139 & 0.2216 & 0.0151 & 0.0519 & 0.0558 & 0.0168 & 0.0516 & 0.0563 \\  	
\bottomrule
\end{tabular}
\end{footnotesize}
\end{center}
\label{tab:res11}
\end{table}

\begin{table}[htbp]
\caption{Monte Carlo means, Monte Carlo standard deviations and estimated standard deviations of the estimators for true class fractions, in case of sample size equals to $250$. The second value of $\Sigma$ is considered. ``True'' denotes the true parameter value.}
\begin{center}
\begin{footnotesize}
\begin{tabular}{c c c c c c c c c c}
\toprule
	& TCF$_1$ & TCF$_2$ & TCF$_3$ & MC.sd$_1$ & MC.sd$_2$ & MC.sd$_3$ & asy.sd$_1$ & asy.sd$_2$ & asy.sd$_3$ \\
\midrule
	& \multicolumn{9}{c}{cut points $ = (2,4)$} \\
\hdashline
	True & 0.5000 & 0.3970 & 0.8970 &  &  &  &  &  &  \\ 
  	FI & 0.4999 & 0.3974 & 0.8973 & 0.0503 & 0.0421 & 0.0362 & 0.0432 & 0.0352 & 0.0466 \\ 
  	MSI & 0.5000 & 0.3975 & 0.8971 & 0.0521 & 0.0497 & 0.0416 & 0.0461 & 0.0451 & 0.0515 \\ 
  	IPW & 0.4989 & 0.3990 & 0.8971 & 0.0663 & 0.0685 & 0.0534 & 0.0647 & 0.0681 & 0.0530 \\ 
  	SPE & 0.5004 & 0.3980 & 0.8976 & 0.0570 & 0.0619 & 0.0516 & 0.0563 & 0.0620 & 0.0493 \\
    1NN & 0.4982 & 0.3953 & 0.8976 & 0.0587 & 0.0642 & 0.0537 & 0.0561 & 0.0618 & 0.0487 \\ 
    3NN & 0.4960 & 0.3933 & 0.8970 & 0.0556 & 0.0595 & 0.0494 & 0.0548 & 0.0600 & 0.0472 \\
\midrule
	& \multicolumn{9}{c}{cut points $ = (2,5)$} \\
\hdashline
	True & 0.5000 & 0.6335 & 0.7365 &  &  &  &  &  &  \\ 
  	FI & 0.4999 & 0.6337 & 0.7395 & 0.0503 & 0.0436 & 0.0583 & 0.0432 & 0.0379 & 0.0554 \\ 
  	MSI & 0.5000 & 0.6330 & 0.7385 & 0.0521 & 0.0508 & 0.0613 & 0.0461 & 0.0469 & 0.0612 \\ 
  	IPW & 0.4989 & 0.6335 & 0.7386 & 0.0663 & 0.0676 & 0.0728 & 0.0647 & 0.0663 & 0.0745 \\ 
  	SPE & 0.5004 & 0.6333 & 0.7390 & 0.0570 & 0.0622 & 0.0682 & 0.0563 & 0.0612 & 0.0673 \\ 
    1NN & 0.4982 & 0.6304 & 0.7400 & 0.0587 & 0.0645 & 0.0721 & 0.0561 & 0.0615 & 0.0672 \\ 
    3NN & 0.4960 & 0.6283 & 0.7396 & 0.0556 & 0.0600 & 0.0670 & 0.0548 & 0.0597 & 0.0654 \\ 
\midrule
	& \multicolumn{9}{c}{cut points $ = (2,7)$} \\
\hdashline
	True & 0.5000 & 0.8682 & 0.2635 &  &  &  &  &  &  \\ 
  	FI & 0.4999 & 0.8676 & 0.2655 & 0.0503 & 0.0316 & 0.0560 & 0.0432 & 0.0294 & 0.0478 \\ 
  	MSI & 0.5000 & 0.8678 & 0.2660 & 0.0521 & 0.0374 & 0.0583 & 0.0461 & 0.0364 & 0.0512 \\ 
  	IPW & 0.4989 & 0.8682 & 0.2669 & 0.0663 & 0.0507 & 0.0698 & 0.0647 & 0.0484 & 0.0692 \\ 
  	SPE & 0.5004 & 0.8681 & 0.2663 & 0.0570 & 0.0476 & 0.0608 & 0.0563 & 0.0459 & 0.0600 \\
    1NN & 0.4982 & 0.8672 & 0.2672 & 0.0587 & 0.0495 & 0.0629 & 0.0561 & 0.0458 & 0.0609 \\ 
    3NN & 0.4960 & 0.8657 & 0.2671 & 0.0556 & 0.0452 & 0.0610 & 0.0548 & 0.0442 & 0.0601 \\
\midrule
	& \multicolumn{9}{c}{cut points $ = (4,5)$} \\
\hdashline
	True & 0.8970 & 0.2365 & 0.7365 &  &  &  &  &  &  \\ 
  	FI & 0.8980 & 0.2363 & 0.7395 & 0.0284 & 0.0367 & 0.0583 & 0.0239 & 0.0301 & 0.0554 \\ 
  	MSI & 0.8976 & 0.2356 & 0.7385 & 0.0318 & 0.0437 & 0.0613 & 0.0292 & 0.0386 & 0.0612 \\ 
  	IPW & 0.8975 & 0.2345 & 0.7386 & 0.0377 & 0.0594 & 0.0728 & 0.0373 & 0.0578 & 0.0745 \\ 
  	SPE & 0.8974 & 0.2353 & 0.7390 & 0.0364 & 0.0529 & 0.0682 & 0.0361 & 0.0522 & 0.0673 \\ 
    1NN & 0.8958 & 0.2352 & 0.7400 & 0.0388 & 0.0540 & 0.0721 & 0.0373 & 0.0524 & 0.0672 \\ 
    3NN & 0.8946 & 0.2350 & 0.7396 & 0.0362 & 0.0502 & 0.0670 & 0.0361 & 0.0510 & 0.0654 \\
\midrule
	& \multicolumn{9}{c}{cut points $ = (4,7)$} \\
\hdashline
	True & 0.8970 & 0.4711 & 0.2635 &  &  &  &  &  &  \\ 
  	FI & 0.8980 & 0.4703 & 0.2655 & 0.0284 & 0.0512 & 0.0560 & 0.0239 & 0.0413 & 0.0478 \\ 
  	MSI & 0.8976 & 0.4703 & 0.2660 & 0.0318 & 0.0561 & 0.0583 & 0.0292 & 0.0490 & 0.0512 \\ 
  	IPW & 0.8975 & 0.4692 & 0.2669 & 0.0377 & 0.0693 & 0.0698 & 0.0373 & 0.0679 & 0.0692 \\ 
  	SPE & 0.8974 & 0.4701 & 0.2663 & 0.0364 & 0.0638 & 0.0608 & 0.0361 & 0.0629 & 0.0600 \\ 
    1NN & 0.8958 & 0.4719 & 0.2672 & 0.0388 & 0.0666 & 0.0629 & 0.0373 & 0.0630 & 0.0609 \\ 
    3NN & 0.8946 & 0.4724 & 0.2671 & 0.0362 & 0.0627 & 0.0610 & 0.0361 & 0.0611 & 0.0601 \\ 
\midrule  	
	& \multicolumn{9}{c}{cut points $ = (5,7)$} \\
\hdashline
	True & 0.9711 & 0.2347 & 0.2635 &  &  &  &  &  &  \\ 
  	FI & 0.9710 & 0.2339 & 0.2655 & 0.0124 & 0.0407 & 0.0560 & 0.0104 & 0.0336 & 0.0478 \\ 
  	MSI & 0.9709 & 0.2348 & 0.2660 & 0.0166 & 0.0461 & 0.0583 & 0.0156 & 0.0412 & 0.0512 \\ 
  	IPW & 0.9709 & 0.2347 & 0.2669 & 0.0204 & 0.0568 & 0.0698 & 0.0202 & 0.0562 & 0.0692 \\ 
  	SPE & 0.9709 & 0.2348 & 0.2663 & 0.0202 & 0.0531 & 0.0608 & 0.0199 & 0.0524 & 0.0600 \\ 
    1NN & 0.9701 & 0.2368 & 0.2672 & 0.0217 & 0.0549 & 0.0629 & 0.0213 & 0.0533 & 0.0609 \\ 
    3NN & 0.9695 & 0.2375 & 0.2671 & 0.0200 & 0.0519 & 0.0610 & 0.0206 & 0.0517 & 0.0601 \\ 	
\bottomrule
\end{tabular}
\end{footnotesize}
\end{center}
\label{tab:res12}
\end{table}

\begin{table}[htbp]
\caption{Monte Carlo means, Monte Carlo standard deviations and estimated standard deviations of the estimators for true class fractions, in case of sample size equals to $250$. The third value of $\Sigma$ is considered. ``True'' denotes the true parameter value.}
\begin{center}
\begin{footnotesize}
\begin{tabular}{c c c c c c c c c c}
\toprule
	& TCF$_1$ & TCF$_2$ & TCF$_3$ & MC.sd$_1$ & MC.sd$_2$ & MC.sd$_3$ & asy.sd$_1$ & asy.sd$_2$ & asy.sd$_3$ \\
\midrule
	& \multicolumn{9}{c}{cut points $ = (2,4)$} \\
\hdashline
	True & 0.5000 & 0.3031 & 0.8031 &  &  &  &  &  &  \\ 
  	FI & 0.5009 & 0.3031 & 0.8047 & 0.0488 & 0.0344 & 0.0495 & 0.0418 & 0.0284 & 0.0467 \\ 
  	MSI & 0.5005 & 0.3032 & 0.8045 & 0.0515 & 0.0448 & 0.0544 & 0.0460 & 0.0410 & 0.0542 \\ 
  	IPW & 0.5015 & 0.3030 & 0.8043 & 0.0624 & 0.0632 & 0.0649 & 0.0618 & 0.0620 & 0.0640 \\ 
  	SPE & 0.5007 & 0.3034 & 0.8043 & 0.0565 & 0.0576 & 0.0628 & 0.0564 & 0.0574 & 0.0614 \\ 
    1NN & 0.4997 & 0.3021 & 0.8047 & 0.0592 & 0.0602 & 0.0682 & 0.0571 & 0.0584 & 0.0621 \\ 
    3NN & 0.4984 & 0.3018 & 0.8043 & 0.0561 & 0.0565 & 0.0632 & 0.0556 & 0.0566 & 0.0601 \\ 
\midrule
	& \multicolumn{9}{c}{cut points $ = (2,5)$} \\
\hdashline
	True & 0.5000 & 0.4682 & 0.6651 &  &  &  &  &  &  \\ 
  	FI & 0.5009 & 0.4692 & 0.6668 & 0.0488 & 0.0384 & 0.0616 & 0.0418 & 0.0323 & 0.0536 \\ 
  	MSI & 0.5005 & 0.4687 & 0.6666 & 0.0515 & 0.0495 & 0.0658 & 0.0460 & 0.0455 & 0.0610 \\ 
  	IPW & 0.5015 & 0.4681 & 0.6670 & 0.0624 & 0.0671 & 0.0753 & 0.0618 & 0.0670 & 0.0743 \\ 
  	SPE & 0.5007 & 0.4690 & 0.6665 & 0.0565 & 0.0624 & 0.0721 & 0.0564 & 0.0622 & 0.0704 \\ 
    1NN & 0.4997 & 0.4676 & 0.6668 & 0.0592 & 0.0661 & 0.0780 & 0.0571 & 0.0634 & 0.0717 \\ 
    3NN & 0.4984 & 0.4670 & 0.6666 & 0.0561 & 0.0619 & 0.0729 & 0.0556 & 0.0614 & 0.0695 \\ 
\midrule
	& \multicolumn{9}{c}{cut points $ = (2,7)$} \\
\hdashline
	True & 0.5000 & 0.7027 & 0.3349 &  &  &  &  &  &  \\ 
  	FI & 0.5009 & 0.7030 & 0.3358 & 0.0488 & 0.0375 & 0.0595 & 0.0418 & 0.0318 & 0.0501 \\ 
  	MSI & 0.5005 & 0.7027 & 0.3360 & 0.0515 & 0.0474 & 0.0637 & 0.0460 & 0.0435 & 0.0563 \\ 
  	IPW & 0.5015 & 0.7026 & 0.3366 & 0.0624 & 0.0625 & 0.0730 & 0.0618 & 0.0618 & 0.0716 \\ 
  	SPE & 0.5007 & 0.7032 & 0.3362 & 0.0565 & 0.0591 & 0.0677 & 0.0564 & 0.0583 & 0.0657 \\ 
    1NN & 0.4997 & 0.7024 & 0.3366 & 0.0592 & 0.0633 & 0.0712 & 0.0571 & 0.0592 & 0.0675 \\ 
    3NN & 0.4984 & 0.7016 & 0.3362 & 0.0561 & 0.0590 & 0.0680 & 0.0556 & 0.0572 & 0.0660 \\ 
\midrule
	& \multicolumn{9}{c}{cut points $ = (4,5)$} \\
\hdashline
	True & 0.8031 & 0.1651 & 0.6651 &  &  &  &  &  &  \\ 
  	FI & 0.8042 & 0.1660 & 0.6668 & 0.0383 & 0.0277 & 0.0616 & 0.0323 & 0.0231 & 0.0536 \\ 
  	MSI & 0.8037 & 0.1655 & 0.6666 & 0.0415 & 0.0372 & 0.0658 & 0.0380 & 0.0333 & 0.0610 \\ 
  	IPW & 0.8039 & 0.1651 & 0.6670 & 0.0473 & 0.0503 & 0.0753 & 0.0473 & 0.0493 & 0.0743 \\ 
  	SPE & 0.8036 & 0.1655 & 0.6665 & 0.0456 & 0.0465 & 0.0721 & 0.0458 & 0.0455 & 0.0704 \\ 
    1NN & 0.8032 & 0.1655 & 0.6668 & 0.0487 & 0.0481 & 0.0780 & 0.0472 & 0.0466 & 0.0717 \\ 
    3NN & 0.8020 & 0.1651 & 0.6666 & 0.0460 & 0.0450 & 0.0729 & 0.0457 & 0.0451 & 0.0695 \\ 
\midrule
	& \multicolumn{9}{c}{cut points $ = (4,7)$} \\
\hdashline
	True & 0.8031 & 0.3996 & 0.3349 &  &  &  &  &  &  \\ 
 	FI & 0.8042 & 0.3999 & 0.3358 & 0.0383 & 0.0426 & 0.0595 & 0.0323 & 0.0349 & 0.0501 \\ 
  	MSI & 0.8037 & 0.3995 & 0.3360 & 0.0415 & 0.0522 & 0.0637 & 0.0380 & 0.0463 & 0.0563 \\ 
  	IPW & 0.8039 & 0.3996 & 0.3366 & 0.0473 & 0.0658 & 0.0730 & 0.0473 & 0.0645 & 0.0716 \\ 
  	SPE & 0.8036 & 0.3998 & 0.3362 & 0.0456 & 0.0618 & 0.0677 & 0.0458 & 0.0606 & 0.0657 \\ 
    1NN & 0.8032 & 0.4003 & 0.3366 & 0.0487 & 0.0660 & 0.0712 & 0.0472 & 0.0619 & 0.0675 \\ 
    3NN & 0.8020 & 0.3998 & 0.3362 & 0.0460 & 0.0617 & 0.0680 & 0.0457 & 0.0600 & 0.0660 \\ 
\midrule  	
	& \multicolumn{9}{c}{cut points $ = (5,7)$} \\
\hdashline
	True & 0.8996 & 0.2345 & 0.3349 &  &  &  &  &  &  \\ 
  	FI & 0.9003 & 0.2338 & 0.3358 & 0.0266 & 0.0351 & 0.0595 & 0.0224 & 0.0292 & 0.0501 \\ 
  	MSI & 0.9004 & 0.2340 & 0.3360 & 0.0308 & 0.0443 & 0.0637 & 0.0285 & 0.0398 & 0.0563 \\ 
  	IPW & 0.9005 & 0.2345 & 0.3366 & 0.0355 & 0.0555 & 0.0730 & 0.0353 & 0.0550 & 0.0716 \\ 
  	SPE & 0.9004 & 0.2342 & 0.3362 & 0.0349 & 0.0523 & 0.0677 & 0.0346 & 0.0517 & 0.0657 \\ 
    1NN & 0.9000 & 0.2348 & 0.3366 & 0.0373 & 0.0556 & 0.0712 & 0.0361 & 0.0531 & 0.0675 \\ 
    3NN & 0.8992 & 0.2346 & 0.3362 & 0.0349 & 0.0520 & 0.0680 & 0.0349 & 0.0515 & 0.0660 \\ 
\bottomrule
\end{tabular}
\end{footnotesize}
\end{center}
\label{tab:res13}
\end{table}

\subsection{Misspecified models}
We start from two independent random variables $Z_1 \sim \mathcal{N}(0,0.5)$ and $Z_2 \sim \mathcal{N}(0,0.5)$. The true conditional disease $D$ is generated by a trinomial random vector $(D_1,D_2,D_3)$ such that
\[
D_1 = \left\{\begin{array}{r l}
1 & \mathrm{ if } \, Z_1 + Z_2 \le h_1 \\
0 & \mathrm{ otherwise}
\end{array}\right. ,
\quad 
D_2 = \left\{\begin{array}{r l}
1 & \mathrm{ if } \, h_1 < Z_1 + Z_2 \le h_2 \\
0 & \mathrm{ otherwise}
\end{array}\right. ,
\quad
D_3 = \left\{\begin{array}{r l}
1 & \mathrm{ if } \, Z_1 + Z_2 > h_2 \\
0 & \mathrm{ otherwise}
\end{array}\right. .
\]
Here, $h_1$ and $h_2$ are two thresholds. We choose $h_1$ and $h_2$ to make $\theta_1 = 0.4$ and $\theta_3 = 0.25$. The continuous test results $T$ and the covariate $A$ are generated to be related to $D$ through $Z_1$ and $Z_2$. More precisely,
\[
T  = \alpha(Z_1 + Z_2) + \varepsilon_1, \qquad A = Z_1 + Z_2 + \varepsilon_2,
\]
where $\varepsilon_1$ and $\varepsilon_2$ are two independent normal random variables with mean $0$ and the common variance $0.25$. The verification status $V$ is simulated by the following logistic model
\[
\mathrm{logit}\left\{Pr(V = 1|T,A)\right\} = -1.5 - 0.35 T - 1.5A.
\]
Under this model, the verification rate is roughly $0.276$. This has led us to the choice of $n = 1000$. For the cut-point, we consider six pairs $(c_1,c_2)$, i.e., $(-1.0, -0.5)$, $(-1.0,0.7)$, $(-1.0, 1.3)$, $(-0.5,0.7)$, $(-0.5,1.3)$ and $(0.7, 1.3)$. Within this set--up, we determine the true values of TCF's as follows:
\begin{eqnarray}
{\TCF}_{1}(c_1) &=& \frac{1}{\Phi(h_1)} \int_{-\infty}^{h_1}\Phi\left(\frac{c_1 - \alpha z}{\sqrt{0.25}}\right)\phi(z)\ud z\nonumber, \\
{\TCF}_{2}(c_1,c_2) &=& \frac{1}{\Phi(h_2) - \Phi(h_1)} \int_{h_1}^{h_2}\left[\Phi\left(\frac{c_2 - \alpha z}{\sqrt{0.25}}\right) - \Phi\left(\frac{c_1 - \alpha z}{\sqrt{0.25}}\right)\right]\phi(z)\ud z, \nonumber \\
{\TCF}_{3}(c_2) &=& 1 - \frac{1}{1 - \Phi(h_2)} \int_{h_2}^{\infty}\Phi\left(\frac{c_2 - \alpha z}{\sqrt{0.25}}\right)\phi(z)\ud z, \nonumber
\end{eqnarray}
where $\phi(\cdot)$ denotes the density function of the standard normal random variable. We choose $\alpha = 0.5$.

The aim in this scenario is to compare  FI, MSI, IPW, SPE and KNN estimators when both the estimates for $\hat{\pi}_i$ and $\hat{\rho}_{ki}$ in the parametric approach are inconsistent. Therefore, $\hat{\rho}_{ki}$ could be obtained from a multinomial logistic regression model with $D = (D_1,D_2,D_3)$ as the response and $T$ as predictor. To estimate $\pi_i$, we use a generalized linear model for $V$ given $T$ and $A^{2/3}$ with logit link. Clearly, the two fitted models are misspecified. The KNN estimators are obtained by using $K = 1$ and $K=3$ and the Euclidean distance. Again, we use $\bar{K} = 2$ in the KNN procedure to estimate standard deviations of KNN estimators.

Table \ref{tab:res2} presents Monte Carlo means and standard deviations (across 5000 replications) for the estimators of the true class fractions, $\TCF_1$, $\TCF_2$ and $\TCF_3$.
The table also gives the means of the estimated standard deviations (of the estimators), based on the asymptotic theory.  The table clearly shows limitations of the (partially) parametric approaches in case of misspecified models for $\Pro(D_k = 1|T,A)$ and $\Pro(V = 1|T,A)$. More precisely, in term of bias, the FI, MSI, IPW and SPE approaches perform almost always poorly, with high distortion in almost all cases. As we mentioned in Section 2, the SPE estimators could fall outside the interval $(0,1)$. In our simulations, in the worst case, the estimator $\widehat{\TCF}_{3,\mathrm{SPE}}(-1.0, -0.5)$ gives rise to $20\%$ of the values  greater than $1$. Moreover, the Monte Carlo standard deviations shown in the table indicate that the SPE approach might yield unstable estimates. Finally, the misspecification also has a clear effect on the estimated standard deviations of the estimators.
On the other side, the estimators 1NN and 3NN seem to perform well in terms of both bias and standard deviation. In fact, KNN estimators yield estimated values that are near to the true values. In addition, we observe that the estimator 3NN has larger bias than 1NN, but with slightly less variance. 
%{\bf Also the  standard deviation estimating procedure for KNN estimators work pretty well}.
%
%{\bf !?! For asymptotic standard deviations, the results shown in Table \ref{tab:res2} indicate that the variances estimating procedure for KNN estimators work pretty well whereas the processes for FI, MSI, IPW and SPE estimators appear the errors.}

\begin{table}[htbp]
\caption{Monte Carlo means, Monte Carlo standard deviations and estimated standard deviations of the estimators for true class fractions when both models for $\rho_k(t,a)$ and $\pi(t,a)$ are misspecified and sample size equals to $1000$. ``True'' denotes the true parameter value.}
\begin{center}
\begin{footnotesize}
\begin{tabular}{c c c c c c c c c c}
\toprule
	& TCF$_1$ & TCF$_2$ & TCF$_3$ & MC.sd$_1$ & MC.sd$_2$ & MC.sd$_3$ & asy.sd$_1$ & asy.sd$_2$ & asy.sd$_3$ \\
\midrule
	& \multicolumn{9}{c}{cut points $ = (-1.0,-0.5)$} \\
\hdashline
	True & 0.1812 & 0.1070 & 0.9817 &  &  &  &  &  &  \\
	FI & 0.1290 & 0.0588 & 0.9888 & 0.0153 & 0.0133 & 0.0118 & 0.0154 & 0.0087 & 0.0412 \\ 
 	MSI & 0.1299 & 0.0592 & 0.9895 & 0.0154 & 0.0153 & 0.0131 & 0.0157 & 0.0110 & 0.0417 \\ 
  	IPW & 0.1231 & 0.0576 & 0.9889 & 0.0178 & 0.0211 & 0.0208 & 0.0175 & 0.0207 & 0.3694 \\ 
  	SPE & 0.1407 & 0.0649 & 0.9877 & 0.0173 & 0.0216 & 0.0231 & 0.0176 & 0.0212 & 0.0432 \\ 
  	1NN & 0.1809 & 0.1036 & 0.9817 & 0.0224 & 0.0304 & 0.0255 & 0.0211 & 0.0262 & 0.0242 \\ 
  	3NN & 0.1795 & 0.0991 & 0.9814 & 0.0214 & 0.0258 & 0.0197 & 0.0208 & 0.0244 & 0.0240 \\
\midrule
	& \multicolumn{9}{c}{cut points $ = (-1.0,0.7)$} \\
\hdashline
	True & 0.1812 & 0.8609 & 0.4469 &  &  &  &  &  &  \\ 
	FI & 0.1290 & 0.7399 & 0.5850 & 0.0153 & 0.0447 & 0.1002 & 0.0154 & 0.0181 & 0.0739 \\ 
	MSI & 0.1299 & 0.7423 & 0.5841 & 0.0154 & 0.0453 & 0.1008 & 0.0157 & 0.0188 & 0.0666 \\ 
	IPW & 0.1231 & 0.7690 & 0.5004 & 0.0178 & 0.0902 & 0.2049 & 0.0175 & 0.0844 & 0.2018 \\ 
	SPE & 0.1407 & 0.7635 & 0.5350 & 0.0173 & 0.0702 & 0.2682 & 0.0176 & 0.0668 & 2.0344 \\ 
	1NN & 0.1809 & 0.8452 & 0.4406 & 0.0224 & 0.0622 & 0.1114 & 0.0211 & 0.0544 & 0.1079 \\ 
	3NN & 0.1795 & 0.8285 & 0.4339 & 0.0214 & 0.0521 & 0.0882 & 0.0208 & 0.0516 & 0.1066 \\
\midrule
	& \multicolumn{9}{c}{cut points $ = (-1.0,1.3)$} \\
\hdashline
	True & 0.1812 & 0.9732 & 0.1171 &  &  &  &  &  &  \\ 
	FI & 0.1290 & 0.9499 & 0.1900 & 0.0153 & 0.0179 & 0.0550 & 0.0154 & 0.0133 & 0.0422 \\ 
	MSI & 0.1299 & 0.9516 & 0.1902 & 0.0154 & 0.0184 & 0.0552 & 0.0157 & 0.0142 & 0.0389 \\ 
	IPW & 0.1231 & 0.9645 & 0.1294 & 0.0178 & 0.0519 & 0.1795 & 0.0175 & 0.0466 & 0.1344 \\ 
	SPE & 0.1407 & 0.9567 & 0.1760 & 0.0173 & 0.0425 & 0.3383 & 0.0176 & 0.0402 & 3.4770 \\ 
	1NN & 0.1809 & 0.9656 & 0.1124 & 0.0224 & 0.0218 & 0.0448 & 0.0211 & 0.0317 & 0.0710 \\ 
	3NN & 0.1795 & 0.9604 & 0.1086 & 0.0214 & 0.0172 & 0.0338 & 0.0208 & 0.0305 & 0.0716 \\
\midrule
	& \multicolumn{9}{c}{cut points $ = (-0.5,0.7)$} \\
\hdashline
	True & 0.4796 & 0.7539 & 0.4469 &  &  &  &  &  &  \\ 
	FI & 0.3715 & 0.6811 & 0.5850 & 0.0270 & 0.0400 & 0.1002 & 0.0151 & 0.0145 & 0.0739 \\ 
	MSI & 0.3723 & 0.6831 & 0.5841 & 0.0271 & 0.0409 & 0.1008 & 0.0162 & 0.0172 & 0.0666 \\ 
	IPW & 0.3547 & 0.7114 & 0.5004 & 0.0325 & 0.0883 & 0.2049 & 0.0322 & 0.0831 & 0.2018 \\ 
	SPE & 0.3949 & 0.6986 & 0.5350 & 0.0318 & 0.0687 & 0.2682 & 0.0331 & 0.0657 & 2.0344 \\ 
	1NN & 0.4783 & 0.7416 & 0.4406 & 0.0361 & 0.0610 & 0.1114 & 0.0311 & 0.0551 & 0.1079 \\ 
	3NN & 0.4756 & 0.7294 & 0.4339 & 0.0341 & 0.0499 & 0.0882 & 0.0304 & 0.0523 & 0.1066 \\
\midrule
	& \multicolumn{9}{c}{cut points $ = (-0.5,1.3)$} \\
\hdashline
	True & 0.4796 & 0.8661 & 0.1171 &  &  &  &  &  &  \\ 
 	FI & 0.3715 & 0.8910 & 0.1900 & 0.0270 & 0.0202 & 0.0550 & 0.0151 & 0.0142 & 0.0422 \\ 
	MSI & 0.3723 & 0.8924 & 0.1902 & 0.0271 & 0.0211 & 0.0552 & 0.0162 & 0.0165 & 0.0389 \\ 
	IPW & 0.3547 & 0.9068 & 0.1294 & 0.0325 & 0.0535 & 0.1795 & 0.0322 & 0.0492 & 0.1344 \\ 
	SPE & 0.3949 & 0.8918 & 0.1760 & 0.0318 & 0.0451 & 0.3383 & 0.0331 & 0.0435 & 3.4770 \\ 
	1NN & 0.4783 & 0.8620 & 0.1124 & 0.0361 & 0.0349 & 0.0448 & 0.0311 & 0.0390 & 0.0710 \\ 
	3NN & 0.4756 & 0.8613 & 0.1086 & 0.0341 & 0.0285 & 0.0338 & 0.0304 & 0.0371 & 0.0716 \\
\midrule  	
	& \multicolumn{9}{c}{cut points $ = (0.7,1.3)$} \\
\hdashline
	True & 0.9836 & 0.1122 & 0.1171 &  &  &  &  &  &  \\ 
	FI & 0.9618 & 0.2099 & 0.1900 & 0.0122 & 0.0317 & 0.0550 & 0.0043 & 0.0132 & 0.0422 \\ 
	MSI & 0.9613 & 0.2093 & 0.1902 & 0.0125 & 0.0320 & 0.0552 & 0.0048 & 0.0135 & 0.0389 \\ 
	IPW & 0.9548 & 0.1955 & 0.1294 & 0.0339 & 0.0831 & 0.1795 & 0.0323 & 0.0784 & 0.1344 \\ 
	SPE & 0.9582 & 0.1932 & 0.1760 & 0.0332 & 0.0618 & 0.3383 & 0.0320 & 0.0605 & 3.4770 \\ 
	1NN & 0.9821 & 0.1204 & 0.1124 & 0.0144 & 0.0494 & 0.0448 & 0.0133 & 0.0487 & 0.0710 \\ 
	3NN & 0.9804 & 0.1319 & 0.1086 & 0.0138 & 0.0404 & 0.0338 & 0.0131 & 0.0464 & 0.0716 \\  	
\bottomrule
\end{tabular}
\end{footnotesize}
\end{center}
\label{tab:res2}
\end{table}

\section{An illustration}

%To illustrate the application of the proposed method,
We use data on epithelial ovarian cancer (EOC) extracted from the Pre-PLCO Phase II Dataset from the SPORE/Early Detection Network/Prostate, Lung, Colon, and Ovarian Cancer Ovarian Validation Study. \footnote{The study protocol and data are publicly available at the address: \url{http://edrn.nci.nih.gov/protocols/119-spore-edrn -pre-plco-ovarian-phase-ii-val idation}.}
%The study protocol and data are publicly available at the address: \texttt{http://edrn.nci.nih.gov/protocols/119-spore-edrn -pre-plco-ovarian-phase-ii-val idation}, along with descriptions of the study aims and analytic methods.  

As in \cite{toduc:15}, we consider  the following three classes of EOC, i.e.,  benign disease, early stage (I and II) and late stage (III and IV) cancer, and 12 of the 59 available biomarkers, i.e.  CA125, CA153, CA72--4, Kallikrein 6 (KLK6), HE4, Chitinase (YKL40) and immune costimulatory protein--B7H4 (DD--0110), Insulin--like growth factor 2 (IGF2), Soluble mesothelin-related protein (SMRP), Spondin--2 (DD--P108), Decoy Receptor 3 (DcR3; DD--C248) and Macrophage inhibitory cytokine 1 (DD--X065). In addition, age of patients  is also considered.

%%%
After cleaning for missing data, we are left 134 patients with benign disease, 67 early stage samples and 77 late stage samples. 
%With the purpose of comparing the ability of the selected markers to discriminate among the three classes,
As a preliminary step of our analysis we ranked the 12 markers according to 
value of VUS, estimated on the complete data. 
%Results are illustrated in Table~\ref{tab:full_vus}.
%\centerline{[Insert Table \ref{tab:full_vus} around here]}
The observed ordering, consistent with medical knowledge, led us to select CA125 as
the test $T$ to be used to illustrate our method. 

To mimic verification bias, 
a  subset of the complete dataset is constructed using the test $T$ and a vector $A=(A_1, A_2)$ of two 
covariates, namely the marker CA153 ($A_1$) and age ($A_2$).
Reasons for using CA153 as a covariate come from the medical literature that suggests that 
the concomitant measurement of CA153 with CA125 could be advantageous in the pre-operative 
discrimination of benign and malignant ovarian tumors.  
In this subset, $T$ and $A$ are known for all samples (patients), but the true status (benign, early stage or late stage)  
is available  only for some samples, that we select according to the following mechanism. 
We select all samples having a value for $T$, $A_1$ and $A_2$ above their respective medians, i.e. 0.87,  0.30 and 45;
as for  the others, we apply the following selection process
\[
\Pro(V = 1) = 0.05 + 0.35\mathrm{I}(T > 0.87) + 0.25\mathrm{I}(A_1 > 0.30) + 0.35\mathrm{I}(A_2 > 45),
\]
%Since the  sample size is relatively small, we choose $\delta_1, \delta_2$ and $\delta_3$ so that the verification probability is not too small. In particular, we select $\delta_1 = 0.35$, $\delta_2 = 0.25$ and $\delta_3 = 0.35,$ 
leading to a marginal probability of selection equal to $0.634$. 

Since the test $T$ and the covariates $A_1, A_2$ are heterogeneous with respect to their variances,  the Mahalanobis distance is used for KNN estimators. Following discussion in Section 3.4, we use the selection rule (\ref{choice:K:1}) to find the size $K$ of the neighborhood. This leads to the choice of $K = 1$ for our data. In addition, we also employ $K = 3$ for the sake of comparison with 1NN result, and produce the estimate of the ROC surface based on full data (Full estimate), displayed in Figure \ref{fg:res:full}.
\begin{figure}[htbp]
\begin{center}
\includegraphics[width=0.5\linewidth]{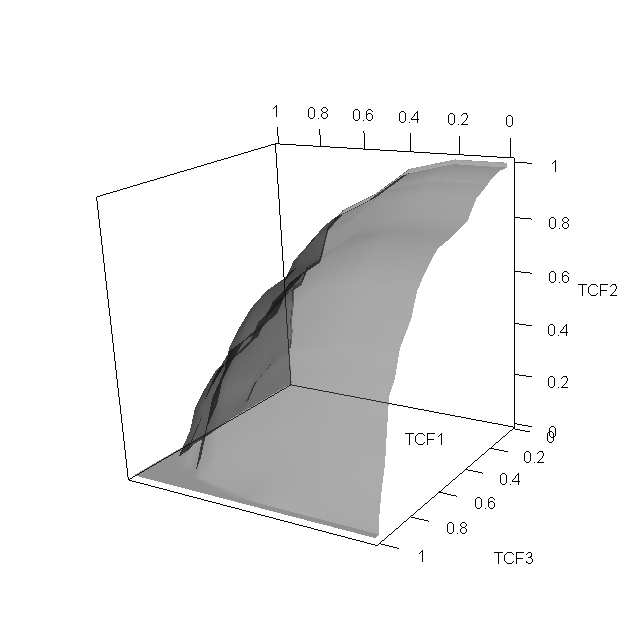}
\caption{Estimated ROC surface for the CA125 test, based on full data.}
\label{fg:res:full}
\end{center}
\end{figure}

Figure \ref{fg:res:knn} shows the 1NN and 3NN estimated ROC surfaces for the test $T$ (CA125). In this figure, we also give the 95\% ellipsoidal confidence regions (green color) for  $(\TCF_1,\TCF_2,\TCF_3)$ at cut points $(-0.56, 2.31)$. 
These regions are built using the asymptotic normality of the estimators.
Compared with the Full estimate,  KNN bias-corrected method proposed in the paper appears  well behave, yielding reasonable estimates of the ROC surface with incoplete data.
%seem to behave well

\begin{figure}[htbp]
\begin{center}
\begin{tabular}{@{}c@{}c@{}}
  \multicolumn{2}{c}{ }\\[-0.5cm]
  \includegraphics[width=0.5\linewidth]{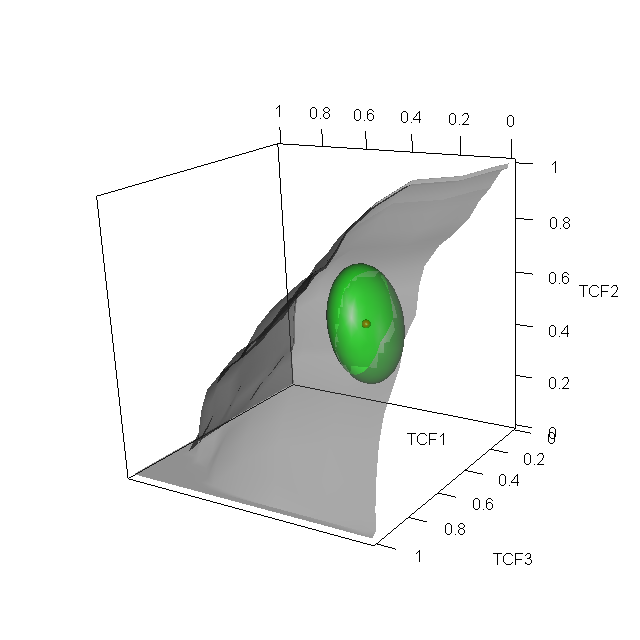} &
  \includegraphics[width=0.5\linewidth]{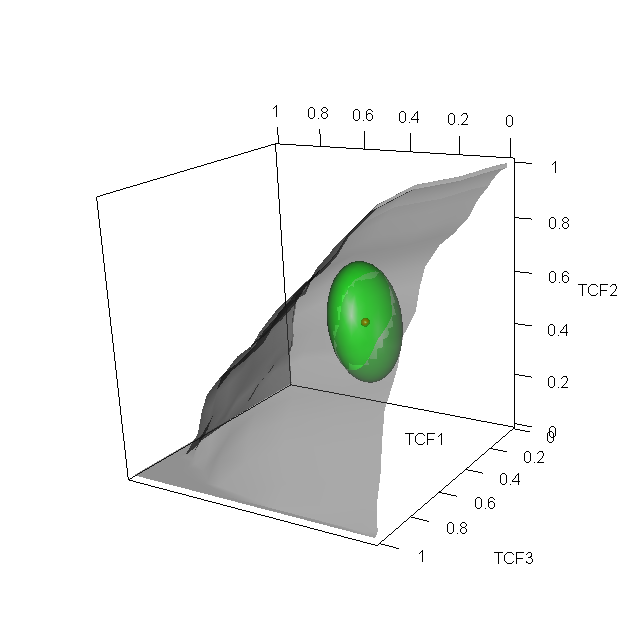} \\ [-0.1em]
      \footnotesize (a) 1NN & \footnotesize (b) 3NN \\
\end{tabular}
\caption{Bias--corrected estimated ROC surfaces for CA125 test, based on incomplete data.}
\label{fg:res:knn}
\end{center}
\end{figure}

%For the sake of comparison, we also produced the estimate of the ROC surface  with full data (Full estimate), reported in Supplementary Material, Figure S1. 
%In Supplementary Material, Figure S2 and Figure S3, we also give the projections of the bias--corrected estimated ROC surfaces
%to the planes defined by $\TCF_1$ versus $\TCF_2$, $\TCF_1$ versus $\TCF_3$ and $\TCF_2$ versus $\TCF_3,$
%i.e., the ROC curves between  classes 1 and 2, classes 1 and 3, classes 2 and 3.
%Such plots are obtained by setting $\TCF_3=0,$ $\TCF_2=0$ and $\TCF_1=0,$ respectively.
%For example, the estimated ROC curves between classes 1 and 2 are defined as the set of points
%\[
%\left\{(\widehat{\TCF}_{1,*}(c_1), \widehat{\TCF}_{2,*}(c_1,+\infty)), c_1 \in \mathbb{R}\right\},
%\]
%that is, we ignore the cut point $c_2$. This is a construction equivalent to the most 
%popular  representation of an estimated ROC curve, which usually 
%depicts $\widehat{\TCF}_{2}$ versus $1-\widehat{\TCF}_{1}.$   
%
%Compared with the Full estimate,  all the bias-corrected methods discussed in the paper 
%seem to behave well, yielding reasonable estimates of the ROC surface and the ROC
%curve. The approach that seems to produce globally better results  is the 
%MSI, whereas that behaving slightly worse  appear to be  the FI. 

%%
\section{Conclusions}

A suitable solution for reducing the effects of model misspecification in statistical inference is to  resort to fully nonparametric methods. This paper proposes a nonparametric estimator of the ROC surface of a continuous-scale diagnostic test.  The estimator is based on nearest-neighbor imputation and works under MAR assumption.  It represents an alternative to (partially) parametric estimators discussed in \cite{toduc:15}. Our simulation results and the presented illustrative example show usefulness of the proposal. 

As in \cite{adi:15} and \cite{adi:15:auc},
a simple extension  of our estimator, that could be used when categorical auxiliary variables are also available, is possible.
Without loss of generality, we suppose that a single factor $C$, with $m$ levels,
is observed together with $T$ and $A$. We also assume that
$C$ may be associated with both $D$ and $V$. In this case, the sample can be divided into $m$ strata, i.e.
$m$ groups of units sharing the same level of $C.$  Then, for example, if  
the MAR assumption and first-order differentiability of
the functions $\rho_k(t,a)$ and $\pi(t,a)$ hold in each stratum,
a consistent and asymptotically normally distributed estimator of TCF$_1$  is 
\begin{equation}\nonumber
\widehat{\TCF}^S_{1,\mathrm{KNN}}(c_1) =\frac{1}{n}\sum_{j=1}^m  n_j
\widehat{\TCF}_{1j,\mathrm{KNN}}^{cond}(c_1),
\end{equation}
where $n_j$ denotes the size of the $j$-th  stratum and $\widehat{\TCF}^{cond}_{1j,\mathrm{KNN}}(c_1) $ denotes the KNN estimator of the conditional TCF$_1$, i.e., the KNN estimator in (\ref{est:knn3}) obtained from the patients  in the $j$-th stratum. 
Of course, we must assume that, for every $j$, ratios $n_j/n$ have finite and nonzero limits as 
$n$ goes to infinity.

%This method was used to estimate ROC curve by \cite{adi:15}. The authors showed the efficiency of this method when the both parametric models are misspecified, i.e., in almost cases of their simulation and application data, NN imputation always yields estimated values that are close to the full data estimates. In addition, the estimation variances procedure in their paper performed well. These motivated us to develop NN imputation method for continuous diagnostic tests with three--class disease status.
%%===============================================================
%\newpage

\appendix

\section{Appendix 1} %{\bf (!!!! to check !!!!)}\\
According the proof of Theorem \ref{thr:knn:2}, we have
\begin{equation}
\left(\hat{\beta}_{12,\mathrm{KNN}} - \hat{\beta}_{22,\mathrm{KNN}}\right) - \left(\beta_{12} - \beta_{22}\right) \simeq \left(S_{12} - S_{22}\right) + \left(R_{12} - R_{22}\right) + \left(W_{12} - W_{22}\right) + o_{p}(n^{-1/2}).
\label{eq:rem:1}
\end{equation}
Here, we have
\begin{eqnarray}
S_{12} - S_{22} &=& \frac{1}{n}\sum_{i=1}^{n} V_i \mathrm{I}(c_1 \le T_i < c_2)\left(D_{2i} - \rho_{2i}\right) \nonumber \\
R_{12} - R_{22} &=& \frac{1}{n}\sum_{i=1}^{n}\left[\mathrm{I}(c_1 \le T_i < c_2)\rho_{2i} - (\beta_{12} - \beta_{22}) \right] \nonumber \\
W_{12} - W_{22} &=& \frac{1}{n}\sum_{i=1}^{n}\mathrm{I}(c_1 \le T_i < c_2)(1-V_i)\left[\frac{1}{K}\sum_{l=1}^{K}\left(V_{i(l)}D_{2i(l)} - \rho_{2i(l)}\right)\right] \nonumber.
\end{eqnarray}
Under that, we realize that quantities $S_{12} - S_{22}$ and $S_{jk}$, so as $R_{12} - R_{22}$ and $R_{jk}$, and $W_{12} - W_{22}$ and $W_{jk}$, play, in essence, a similar role. Therefore, the quantities in right hand side of equation (\ref{eq:rem:1}) have approximately normal distributions with mean $0$ and variances
\begin{eqnarray}
\V \left(\sqrt{n}(S_{12} - S_{22})\right) &=& \E \left\{\pi(T,A)\delta^2(T,A) \right\}, \nonumber \\
\V \left(\sqrt{n} (R_{12} - R_{22}) \right) &=& \V \left[\mathrm{I}(c_1 \le T_i < c_2)\rho_{2}(T,A)\right], \nonumber \\
\V \left(\sqrt{n} (W_{12} - W_{22}) \right) &=&  \frac{1}{K}\E \left[(1 - \pi(T,A))\delta^2(T,A)\right] + \E \left[\frac{(1-\pi(T,A))^2\delta^2(T,A)}{\pi(T,A)}\right] .\nonumber
\end{eqnarray}
where, $\delta^2(T,A)$ is the conditional variance of $\mathrm{I}(c_1 \le T_i < c_2, D_{2i} = 1)$ given $T,A$. Then, we get
\[
\sqrt{n}\left[\left(\hat{\beta}_{12,\mathrm{KNN}} - \hat{\beta}_{22,\mathrm{KNN}}\right) - \left(\beta_{12} - \beta_{22}\right)\right] \stackrel{d}{\to} \mathcal{N}(0,\lambda^2).
\]
To obtain $\lambda^2$, we notice that the quantities $R_{12} - R_{22}$ and $(S_{12} - S_{22}) + (W_{12} - W_{22})$ are uncorrelated and the asymptotic covariance of $S_{12} - S_{22}$ and $W_{12} - W_{22}$ equals to $\E \left[(1 - \pi(T,A))\delta^2(T,A)\right]$. Taking the sum of this covariance and the above variances, the desired asymptotic variance $\lambda^2$ is approximately 
\begin{equation}
\lambda^2 = \left\{(\beta_{12} - \beta_{22})\left[1 - (\beta_{12} - \beta_{22})\right] + \omega^2_{12} - \omega^2_{22}\right\}.
\label{eq:rem:2}
\end{equation}
%with
%\begin{eqnarray}
%\omega^2 &=& \left(1 + \frac{1}{K}\right)\E \left[\mathrm{I}(c_1 \le T_i < c_2)\rho_2(T,A)(1-\rho_2(T,A))(1-\pi(T,A))\right] \nonumber\\
%&& + \: \E \left[\frac{\mathrm{I}(c_1 \le T_i < c_2)\rho_2(T,A)(1-\rho_2(T,A))(1-\pi(T,A))^2}{\pi(T,A)}\right].\nonumber
%\end{eqnarray}

\section{Appendix 2}

Here, we focus on the elements $\xi_{12}$, $\xi_{13}$ and $\xi_{23}$ of the covariance mtrix $\Xi$.
We can write
\begin{eqnarray}
\xi_{12} &=& - \frac{1}{\theta_1 \theta_2} \left(\sigma_{1112} - \sigma_{1122}\right) + \frac{\beta_{11}}{\theta_1^2 \theta_2}\left(\sigma_{112} - \sigma_{122}\right) - \frac{\beta_{12} - \beta_{22}}{\theta_2^2} \left(\frac{\beta_{11}}{\theta_1^2} \sigma_{12}^* - \frac{\sigma_{211}}{\theta_1}\right), 
\label{cov:12} \\
\xi_{13} &=& \frac{1}{1 - \theta_1 - \theta_2}\left(\frac{\beta_{11}}{\theta_1^2}\sigma_{123} - \frac{\sigma_{1123}}{\theta_1}\right)  \nonumber \\
&& + \: \frac{\beta_{23}}{\theta_1(1 - \theta_1 - \theta_2)^2}\left[
\frac{\beta_{11}}{\theta_1} \left(\sigma_1^2 + \sigma_{12}^*\right) - \left(\sigma_{111} + \sigma_{211}\right)
\right], 
\label{cov:13}
\end{eqnarray}
and
\begin{eqnarray}
\xi_{23} &=& \frac{1}{\theta_2(1 - \theta_1 - \theta_2)}\left[\left(\sigma_{1223} - \sigma_{2223}\right) - \frac{\beta_{12} - \beta_{22}}{\theta_2}\sigma_{223}\right] \nonumber \\
&& + \: \frac{\beta_{23}}{\theta_2(1 - \theta_1 - \theta_2)^2}\left[\left(\sigma_{112} - \sigma_{122} + \sigma_{212} - \sigma_{222}\right) - \frac{\beta_{12} - \beta_{22}}{\theta_2}\left(\sigma_2^2 + \sigma_{12}^*\right)\right].
\label{cov:23}
\end{eqnarray}
Recall that
\begin{eqnarray}
\hat{\theta}_{k,\mathrm{KNN}} - \theta_{k} &=& \frac{1}{n}\sum_{i=1}^{n}\left[V_iD_{ki} + (1-V_i)\rho_{ki}\right] + \frac{1}{n}\sum_{i=1}^{n} (1-V_i)(\hat{\rho}_{ki,K} - \rho_{ki}) - \theta_{k} \nonumber\\
&=& \frac{1}{n}\sum_{i=1}^{n}V_i\left[D_{ki} - \rho_{ki}\right] + \frac{1}{n}\sum_{i=1}^{n}\left[\rho_{ki} - \theta_{k}\right]\nonumber \\
&& + \: \frac{1}{n}\sum_{i=1}^{n}\left[\frac{1}{K}\sum_{l=1}^{K}\left(V_{i(l)}D_{ki(l)} - \rho_{ki(l)}\right)\right] + o_p\left(n^{-1/2}\right) \nonumber \\
&=& S_{k} + R_{k} + W_{k} + o_p\left(n^{-1/2}\right); \nonumber
\end{eqnarray}
and
\begin{eqnarray}
\hat{\beta}_{jk,\mathrm{KNN}} - \beta_{jk} &=& \frac{1}{n}\sum_{i=1}^{n}\mathrm{I}(T_i \ge c_j)\left[V_iD_{ki} + (1-V_i)\rho_{ki}\right] + \frac{1}{n}\sum_{i=1}^{n}\mathrm{I}(T_i \ge c_j) (1-V_i)(\hat{\rho}_{ki,K} - \rho_{ki}) - \beta_{jk} \nonumber\\
&=& \frac{1}{n}\sum_{i=1}^{n}\mathrm{I}(T_i \ge c_j)V_i\left[D_{ki} - \rho_{ki}\right] + \frac{1}{n}\sum_{i=1}^{n}\left[\mathrm{I}(T_i \ge c_j)\rho_{ki} - \beta_{jk}\right]\nonumber \\
&& + \: \frac{1}{n}\sum_{i=1}^{n}\mathrm{I}(T_i \ge c_j)(1-V_i)\left[\frac{1}{K}\sum_{l=1}^{K}\left(V_{i(l)}D_{ki(l)} - \rho_{ki(l)}\right)\right] + o_p\left(n^{-1/2}\right) \nonumber \\
&=& S_{jk} + R_{jk} + W_{jk} + o_p\left(n^{-1/2}\right). \nonumber
\end{eqnarray}
Then, we  restating some terms that appear in expressions (\ref{cov:12})--(\ref{cov:23}).
First, we consider the term, $\sigma_{1112} - \sigma_{1122}$. We have
\begin{eqnarray}
\sigma_{1112} - \sigma_{1122} &=& \aCov(\sqrt{n}\hat{\beta}_{11,\mathrm{KNN}},\sqrt{n}\hat{\beta}_{12,\mathrm{KNN}}) - \aCov(\sqrt{n}\hat{\beta}_{11,\mathrm{KNN}},\sqrt{n}\hat{\beta}_{22,\mathrm{KNN}}) \nonumber \\
&=& \aCov(\sqrt{n}\hat{\beta}_{11,\mathrm{KNN}},\sqrt{n}\hat{\beta}_{12,\mathrm{KNN}} - \sqrt{n}\hat{\beta}_{22,\mathrm{KNN}}) \nonumber \\
&=& \aCov\left(\sqrt{n}(S_{11} + R_{11} + W_{11}),\sqrt{n}(S_{12} - S_{22}) + \sqrt{n}(R_{12} - R_{22}) + \sqrt{n}(W_{12} - W_{22})\right) \nonumber \\
&=& \aCov\left(\sqrt{n}S_{11},\sqrt{n}(S_{12} - S_{22})\right) + \aCov\left(\sqrt{n}S_{11},\sqrt{n}(W_{12} - W_{22})\right) \nonumber\\
&& + \: \aCov\left(\sqrt{n}R_{11},\sqrt{n}(R_{12} - R_{22})\right) + \aCov\left(\sqrt{n}W_{11},\sqrt{n}(S_{12} - S_{22})\right) \nonumber \\
&& + \: \aCov\left(\sqrt{n}W_{11},\sqrt{n}(W_{12} - W_{22})\right). \nonumber
\end{eqnarray}
This result follows from the fact that
 $\sqrt{n}R_{11}$ and $\sqrt{n}(S_{12} - S_{22}) + \sqrt{n}(W_{12} - W_{22})$,  and $\sqrt{n}(S_{11} + W_{11})$ and $\sqrt{n}(R_{12} - R_{22})$ are uncorrelated (see also \cite{cheng:94}). By arguments similar to those used in
 \cite{ning:12}, we also obtain
\begin{eqnarray}
\aCov\left(\sqrt{n}S_{11},\sqrt{n}(S_{12} - S_{22})\right) &=& \E \left\{\pi(T,A) \Cov(\mathrm{I}(T \ge c_1)D_1,\mathrm{I}(c_1 \le T < c_2)D_2|T,A )\right\} \nonumber \\
&=& \E \left\{\pi(T,A) \mathrm{I}(c_1 \le T < c_2)\Cov(D_1,D_2|T,A )\right\} \nonumber \\
&=& -\E \left\{\pi(T,A) \mathrm{I}(c_1 \le T < c_2)\rho_{1}(T,A)\rho_{2}(T,A)\right\} \nonumber.
\end{eqnarray}
Similarly, we have that
\begin{eqnarray}
\aCov\left(\sqrt{n}S_{11},\sqrt{n}(W_{12} - W_{22})\right) &=& -\E\left\{[1-\pi(T,A)]\mathrm{I}(c_1 \le T < c_2)\rho_{1}(T,A)\rho_{2}(T,A)\right\}, \nonumber \\
\aCov\left(\sqrt{n}R_{11},\sqrt{n}(R_{12} - R_{22})\right) &=&  - \beta_{11}(\beta_{12} - \beta_{22}) + \E\left\{\mathrm{I}(c_1 \le T < c_2)\rho_{1}(T,A)\rho_{2}(T,A)\right\}, \nonumber \\
\aCov\left(\sqrt{n}W_{11},\sqrt{n}(S_{12} - S_{22})\right) &=& -\E\left\{[1-\pi(T,A)]\mathrm{I}(c_1 \le T < c_2)\rho_{1}(T,A)\rho_{2}(T,A)\right\}, \nonumber \\
\aCov\left(\sqrt{n}W_{11},\sqrt{n}(W_{12} - W_{22})\right) &=& -\frac{1}{K}\E\left\{[1-\pi(T,A)]\mathrm{I}(c_1 \le T < c_2)\rho_{1}(T,A)\rho_{2}(T,A)\right\} \nonumber\\
&& - \E\left\{\frac{[1-\pi(T,A)]^2\mathrm{I}(c_1 \le T < c_2)\rho_{1}(T,A)\rho_{2}(T,A)}{\pi(T,A)}\right\} . \nonumber
\end{eqnarray}
This leads to 
\begin{eqnarray}
\sigma_{1112} - \sigma_{1122} &=& -\Bigg[  
%\left(1 + \frac{1}{K}\right) \E \left\{[1-\pi(T,A)]\mathrm{I}(c_1 \le T < c_2)\rho_{1}(T,A)\rho_{2}(T,A) \right\} \nonumber \\
%&& + \: \E\left\{\frac{[1-\pi(T,A)]^2\mathrm{I}(c_1 \le T < c_2)\rho_{1}(T,A)\rho_{2}(T,A)}{\pi(T,A)}\right\} 
\psi^2_{1212} +  \beta_{11}(\beta_{12} - \beta_{22}) \Bigg],
\label{sig_1112 - sig_1122}
\end{eqnarray}
where
\begin{eqnarray}
\psi^2_{1212} &=&
\left(1 + \frac{1}{K}\right) \E \left\{[1-\pi(T,A)]\mathrm{I}(c_1 \le T < c_2)\rho_{1}(T,A)\rho_{2}(T,A) \right\} \nonumber \\
&& + \: \E\left\{\frac{[1-\pi(T,A)]^2\mathrm{I}(c_1 \le T < c_2)\rho_{1}(T,A)\rho_{2}(T,A)}{\pi(T,A)}\right\}. \nonumber 
\end{eqnarray}

Second, we consider $\sigma_{112} - \sigma_{122}$. In this case, we have
\begin{eqnarray}
\sigma_{112} - \sigma_{122} &=& \aCov(\sqrt{n}\hat{\theta}_{1,\mathrm{KNN}},\sqrt{n}\hat{\beta}_{12,\mathrm{KNN}}) - \aCov(\sqrt{n}\hat{\theta}_{1,\mathrm{KNN}},\sqrt{n}\hat{\beta}_{22,\mathrm{KNN}}) \nonumber \\
&=& \aCov\left(\sqrt{n}\hat{\theta}_{1,\mathrm{KNN}},\sqrt{n}(\hat{\beta}_{12,\mathrm{KNN}} - \hat{\beta}_{22,\mathrm{KNN}})\right) \nonumber \\
&=& \aCov\left(\sqrt{n}(S_{1} + R_{1} + W_{1}),\sqrt{n}(S_{12} - S_{22}) + \sqrt{n}(R_{12} - R_{22}) + \sqrt{n}(W_{12} - W_{22})\right) \nonumber \\
&=& \aCov\left(\sqrt{n}S_{1},\sqrt{n}(S_{12} - S_{22})\right) + \aCov\left(\sqrt{n}S_{1},\sqrt{n}(W_{12} - W_{22})\right) \nonumber\\
&& + \: \aCov\left(\sqrt{n}R_{1},\sqrt{n}(R_{12} - R_{22})\right) + \aCov\left(\sqrt{n}W_{1},\sqrt{n}(S_{12} - S_{22})\right) \nonumber \\
&& + \: \aCov\left(\sqrt{n}W_{1},\sqrt{n}(W_{12} - W_{22})\right). \nonumber
\end{eqnarray}
We obtain
\begin{eqnarray}
\aCov\left(\sqrt{n}S_{1},\sqrt{n}(S_{12} - S_{22})\right) &=& -\E \left\{\pi(T,A) \mathrm{I}(c_1 \le T < c_2)\rho_{1}(T,A)\rho_{2}(T,A)\right\}, \nonumber \\
\aCov\left(\sqrt{n}S_{1},\sqrt{n}(W_{12} - W_{22})\right) &=& -\E\left\{[1-\pi(T,A)]\mathrm{I}(c_1 \le T < c_2)\rho_{1}(T,A)\rho_{2}(T,A)\right\}, \nonumber \\
\aCov\left(\sqrt{n}R_{1},\sqrt{n}(R_{12} - R_{22})\right) &=&  - \theta_{1}(\beta_{12} - \beta_{22}) + \E\left\{\mathrm{I}(c_1 \le T < c_2)\rho_{1}(T,A)\rho_{2}(T,A)\right\}, \nonumber \\
\aCov\left(\sqrt{n}W_{1},\sqrt{n}(S_{12} - S_{22})\right) &=& -\E\left\{[1-\pi(T,A)]\mathrm{I}(c_1 \le T < c_2)\rho_{1}(T,A)\rho_{2}(T,A)\right\}, \nonumber \\
\aCov\left(\sqrt{n}W_{1},\sqrt{n}(W_{12} - W_{22})\right) &=& -\frac{1}{K}\E\left\{[1-\pi(T,A)]\mathrm{I}(c_1 \le T < c_2)\rho_{1}(T,A)\rho_{2}(T,A)\right\} \nonumber\\
&& - \E\left\{\frac{[1-\pi(T,A)]^2\mathrm{I}(c_1 \le T < c_2)\rho_{1}(T,A)\rho_{2}(T,A)}{\pi(T,A)}\right\} , \nonumber
\end{eqnarray}
and then
\begin{eqnarray}
\sigma_{112} - \sigma_{122} &=& - [
%\left(1 + \frac{1}{K}\right) \E \left\{[1-\pi(T,A)]\mathrm{I}(c_1 \le T < c_2)\rho_{1}(T,A)\rho_{2}(T,A) \right\} \nonumber \\
%&& + \: \E\left\{\frac{[1-\pi(T,A)]^2\mathrm{I}(c_1 \le T < c_2)\rho_{1}(T,A)\rho_{2}(T,A)}{\pi(T,A)}\right\} 
\psi^2_{1212} +  \theta_{1}(\beta_{12} - \beta_{22}) ].
\label{sig_112 - sig_122}
\end{eqnarray}\\

Similarly, it is straightforward to obtain 
\begin{eqnarray}
\sigma_{211} &=& -[  
\psi^2_{112}+  \theta_{2}\beta_{11} ]
\label{sig_211}
\end{eqnarray}
and
\begin{eqnarray}
\sigma_{123} &=& - [  
\psi^2_{213}+  \theta_{1}\beta_{23}],
\label{sig_123}
\end{eqnarray}
with
\begin{eqnarray}
\psi^2_{112}&=&
\left(1 + \frac{1}{K}\right) \E \left\{[1-\pi(T,A)]\mathrm{I}(T \ge c_1)\rho_{1}(T,A)\rho_{2}(T,A) \right\} \nonumber \\
&& + \: \E\left\{\frac{[1-\pi(T,A)]^2\mathrm{I}(T \ge c_1)\rho_{1}(T,A)\rho_{2}(T,A)}{\pi(T,A)}\right\} \nonumber 
\end{eqnarray}
and
\begin{eqnarray}
\psi^2_{213}&=&
\left(1 + \frac{1}{K}\right) \E \left\{[1-\pi(T,A)]\mathrm{I}(T \ge c_2)\rho_{1}(T,A)\rho_{3}(T,A) \right\} \nonumber \\
&& + \: \E\left\{\frac{[1-\pi(T,A)]^2\mathrm{I}(T \ge c_2)\rho_{1}(T,A)\rho_{3}(T,A)}{\pi(T,A)}\right\}. \nonumber 
\end{eqnarray}

The covariance between $\sqrt{n}\hat{\theta}_{1,\mathrm{KNN}}$ and $\sqrt{n}\hat{\theta}_{2,\mathrm{KNN}}$ is computed analogously, i.e.,
\begin{eqnarray}
\sigma_{12}^* &=& -[ \theta_{1}\theta_{2} +   \psi^2_{12}
],
\label{sig_12*}
\end{eqnarray}
where
\begin{eqnarray}
\psi^2_{12}&=&
\left(1 + \frac{1}{K}\right) \E \left\{[1-\pi(T,A)]\rho_{1}(T,A)\rho_{2}(T,A) \right\} \nonumber \\
&& + \: \E\left\{\frac{[1-\pi(T,A)]^2\rho_{1}(T,A)\rho_{2}(T,A)}{\pi(T,A)}\right\}. \nonumber 
\end{eqnarray}

By using results (\ref{sig_1112 - sig_1122}), (\ref{sig_112 - sig_122}), (\ref{sig_211}) and (\ref{sig_12*}) into (\ref{cov:12}), we can obtain a suitable expression for  $\aCov\left(\sqrt{n}\widehat{\TCF}_{1,\mathrm{KNN}} (c_1),\sqrt{n}\widehat{\TCF}_{2,\mathrm{KNN}} (c_1,c_2)\right)$, which depends on easily estimable quanties.

Clearly, a similar approach can be used to get suitable expressions for $\xi_{13}$ and $\xi_{23}$ too.
In particular, the estimable version of $\xi_{13}$ can be obtained by using suitable expressions for $\sigma_{123}$, $\sigma_{1123}$
%$\sigma^2_1 + \sigma_{12}^*$ 
and $\sigma_{111} + \sigma_{211}$. 
%The term $\sigma^2_1 + \sigma_{12}^*$ is straightforward to compute by taking the sum of the explicit form of $\sigma^2_1$ and $\sigma_{12}^*$. 
The quantity $\sigma_{123}$ is already computed in (\ref{sig_123}), and the formula for $\sigma_{1123}$ can be obtained as
%can be derived as in (\ref{sig_1112 - sig_1122}). Precisely, we get
\begin{eqnarray}
\sigma_{1123} &=& -[  
\psi^2_{213} +  \beta_{11}\beta_{23}].
\nonumber \label{sig_1123}
\end{eqnarray}
%with
%\begin{eqnarray}
%\psi^2_{213}&=&
%\left(1 + \frac{1}{K}\right) \E \left\{[1-\pi(T,A)]\mathrm{I}(T \ge c_2)\rho_{1}(T,A)\rho_{3}(T,A) \right\} \nonumber \\
%&& + \: \E\left\{\frac{[1-\pi(T,A)]^2\mathrm{I}(T \ge c_2)\rho_{1}(T,A)\rho_{3}(T,A)}{\pi(T,A)}\right\} .
%\end{eqnarray}
To compute $\sigma_{111} + \sigma_{211}$, we notice that
\begin{eqnarray}
\aCov\left(\sqrt{n}\hat{\theta}_{3,\mathrm{KNN}},\sqrt{n}\hat{\beta}_{11,\mathrm{KNN}}\right) &=& \aCov\left(\sqrt{n} - \sqrt{n}(\hat{\theta}_{1,\mathrm{KNN}} + \hat{\theta}_{1,\mathrm{KNN}}),\sqrt{n}\hat{\beta}_{11,\mathrm{KNN}}\right)\nonumber \\
&=& - \aCov\left(\sqrt{n}(\hat{\theta}_{1,\mathrm{KNN}} + \hat{\theta}_{1,\mathrm{KNN}}),\sqrt{n}\hat{\beta}_{11,\mathrm{KNN}}\right)\nonumber.
\end{eqnarray}
It leads to $\sigma_{111} + \sigma_{211} = -\sigma_{311}$. Similarly to (\ref{sig_211}), we have that
\begin{eqnarray}
\sigma_{311} &=& - [  
\psi^2_{113} +  \theta_{3}\beta_{11}],
\nonumber \label{sig_311}
\end{eqnarray}
where
\begin{eqnarray}
\psi^2_{113}&=&
\left(1 + \frac{1}{K}\right) \E \left\{[1-\pi(T,A)]\mathrm{I}(T \ge c_1)\rho_{1}(T,A)\rho_{3}(T,A) \right\} \nonumber \\
&& + \: \E\left\{\frac{[1-\pi(T,A)]^2\mathrm{I}(T \ge c_1)\rho_{1}(T,A)\rho_{3}(T,A)}{\pi(T,A)}\right\}. \nonumber 
\end{eqnarray}

For the last term $\xi_{23}$, we need to make some other calculations. First, the quantity $\sigma_{1223} - \sigma_{2223}$ is obtained as $\sigma_{1112} - \sigma_{1122}$.
We have 
\[
\sigma_{1223} - \sigma_{2223}=-\beta_{23}(\beta_{12} - \beta_{22}),
\] 
because $\mathrm{I}(c_1 \le T < c_2) \mathrm{I}(T \ge c_2) = 0$. Second, the term $\sigma_{223}$ is obtained as
\begin{eqnarray}
\sigma_{223} &=& - [  
\psi^2_{223}+  \theta_{2}\beta_{23}],
\nonumber \label{sig_223}
\end{eqnarray}
where
\begin{eqnarray}
\psi^2_{223}&=&\left(1 + \frac{1}{K}\right) \E \left\{[1-\pi(T,A)]\mathrm{I}(T \ge c_2)\rho_{2}(T,A)\rho_{3}(T,A) \right\} \nonumber \\
&& + \: \E\left\{\frac{[1-\pi(T,A)]^2\mathrm{I}(T \ge c_2)\rho_{2}(T,A)\rho_{3}(T,A)}{\pi(T,A)}\right\} . \nonumber 
\end{eqnarray}
Moreover, it is straightforward to show that
\[
-(\sigma_{312} - \sigma_{322}) = \sigma_{112} - \sigma_{122} + \sigma_{212} - \sigma_{222},
\]
%In analogy with the case of $\sigma_{112} - \sigma_{122}$, 
%we have that
and that
\begin{eqnarray}
\sigma_{312} - \sigma_{322} &=& -[  
\psi^2_{1223} +  \theta_{3}(\beta_{12} - \beta_{22}) ],
\nonumber \label{sig_312 - sig_322}
\end{eqnarray}
with
\begin{eqnarray}
\psi^2_{1223}&=&\left(1 + \frac{1}{K}\right) \E \left\{[1-\pi(T,A)]\mathrm{I}(c_1 \le T < c_2)\rho_{2}(T,A)\rho_{3}(T,A) \right\} \nonumber \\
&& + \: \E\left\{\frac{[1-\pi(T,A)]^2\mathrm{I}(c_1 \le T < c_2)\rho_{2}(T,A)\rho_{3}(T,A)}{\pi(T,A)}\right\}. \nonumber 
\end{eqnarray}
%
%The last, the term $\sigma^2_1 + \sigma_{12}^*$ is straightforward to compute by taking the sum of the explicit form of $\sigma^2_2$ and $\sigma_{12}^*$.
%%

\end{document}